\documentclass[11pt]{article}
\usepackage{amsmath, amssymb, amsthm, verbatim, fullpage, charter, complexity, thm-restate, color, esvect, graphicx, hyperref, dsfont}

\usepackage[T1]{fontenc}
\newtheorem{thm}{Theorem}
\newtheorem{lemma}{Lemma}
\newtheorem{observation}{Observation}
\newtheorem{corollary}{Corollary}

\def\reals{{\mathbb R}}
\def\eps{{\varepsilon}}
\def\etal{{\it et al.~}}

\title{Implicit representations via the polynomial method}

\author{
Jean Cardinal\thanks{%
Universit\'e libre de Bruxelles (ULB), Brussels, Belgium; {\sf jean.cardinal@ulb.be};
{https://orcid.org/0000-0002-2312-0967}{}}
\and
Micha Sharir\thanks{%
School of Computer Science, Tel Aviv University, Tel~Aviv, Israel; {\sf michas@tau.ac.il};
{https://orcid.org/0000-0002-2541-3763}{}}
}

\begin{document}
\maketitle
\sloppy

\begin{abstract}
Semialgebraic graphs are graphs whose vertices are points in $\reals^d$, and adjacency between two vertices is
determined by the truth value of a semialgebraic predicate of constant complexity.
We show how to harness polynomial partitioning methods to construct compact adjacency labeling schemes 
for families of semialgebraic graphs.

That is, we show that for any family of semialgebraic graphs, given a graph on $n$ vertices in this family, we can
assign a label consisting of $O(n^{1-2/(d+1) + \eps})$ bits to each vertex (where $\eps > 0$ can be
made arbitrarily small and the constant of proportionality depends on $\eps$ and on the complexity
of the adjacency-defining predicate), 
such that adjacency between two vertices can be determined solely from their two labels,
without any additional information. 
We obtain for instance that unit disk graphs and segment intersection graphs have such labelings with labels of
$O(n^{1/3 + \eps})$ bits.
This is in contrast to their natural implicit representation consisting of the coordinates of the 
disk centers or segment endpoints, which sometimes require exponentially many bits.
It also improves on the best known bound of $O(n^{1-1/d}\log n)$ for $d$-dimensional 
semialgebraic families due to 
Alon (\textit{Discrete Comput. Geom.}, 2024), a bound that holds more generally for graphs with shattering functions bounded by a degree-$d$ polynomial. Our labeling scheme is efficient in the sense 
that not only adjacency between two vertices can be decided in time linear in the size of their labels,
but the labels can be computed in subquadratic time on a real RAM from the input points and the 
semialgebraic adjacency predicate, using recent polynomial partitioning algorithms.
  
We also give new bounds on the size of adjacency labels for other families of graphs.
In particular, we consider semilinear graphs, which are semialgebraic graphs in which the 
predicate only involves linear polynomials. We show that semilinear graphs have adjacency labels 
of size $O(\log n)$. We also prove that polygon visibility graphs, which are not semialgebraic 
in the above sense, have adjacency labels of size $O(\log^3 n)$.
\end{abstract}

\section{Introduction}

Finding compact encodings of graphs is a standard topic in computer science and discrete mathematics, see for
instance the seminal monograph of Spinrad~\cite{MR1971502}. In this paper, we explore a special kind of distributed
encoding, that associates a binary label, of relatively small size, to each vertex, in such a way that 
adjacency between two vertices can be deduced solely from their labels, without any additional information 
about the graph. We focus on graphs defined by semialgebraic predicates, which are commonly encountered 
in computational geometry and other domains. We first carefully define our objects of study, then summarize our contributions.
Our results are stated in Section~\ref{sec:results}.

\subsection{Adjacency labeling schemes}

We only consider simple, undirected graphs.
A family of graphs is said to have a \emph{$t(n)$-bit adjacency labeling scheme}, where $n$ is the 
number of vertices of the graph, if there exists a computable boolean function 
$A:\{ 0,1\}^*\times \{0,1\}^*\mapsto \{\mathrm{true}, \mathrm{false}\}$ such that for 
every graph $(V,E)$ on $n$ vertices in this family, there exists a map $\ell:V\mapsto \{0,1\}^{t(n)}$ 
such that $(u,v)\in E$ if and only if $A(\ell(u),\ell(v))$ is true.
We may further require that the function $A$ be computable efficiently, that there exist an 
efficient algorithm to produce the labels given a graph in the family, and that adjacency
between two given vertices can be determined efficiently, but the problem
makes sense even without these additional requirements.
Labeling schemes are also referred to as \emph{implicit representations}, and have been the topic of 
intensive research since the seminal results of Muller~\cite{M88} and Kannan, Naor, and Rudich~\cite{MR1186827}. 
See Figure~\ref{fig:example} for an example.

\begin{figure}
  \begin{center}
    \includegraphics{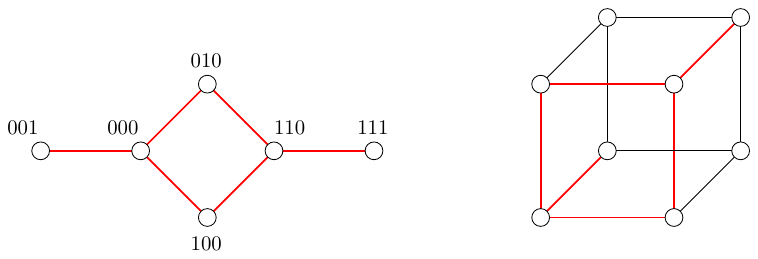}
  \end{center}
  \caption{\label{fig:example}
  {\small{\sf Example of an adjacency labeling of a graph (left).
    Here two vertices $u$ and $v$ are adjacent, hence $A(\ell (u), \ell(v))$ is true, if and only if the 
    Hamming distance between the two labels $\ell(u)$ and $\ell(v)$ is exactly equal to one.
    The graph is therefore an induced subgraph of the cube (shown in red, right).}}}
\end{figure}

Note that if a family has a $t(n)$-bit labeling scheme for which the labeling function $\ell$ is
injective, then for every $n$, there exists a \emph{universal graph} $U_n$ on $2^{t(n)}$ vertices, such that any
$n$-vertex graph in the family is an induced subgraph of $U_n$~\cite{MR1186827}.
Hence upper bounds on the size of adjacency labeling schemes can be thought of as upper bounds 
on the size of universal graphs. 

Clearly, the family of all graphs has an $O(n)$-bit adjacency labeling scheme.
One may, for instance, assign to each vertex the binary word formed by its row in 
the adjacency matrix of the graph, together with its index (column number) in the matrix.
Then, to determine whether $(u,v)$ is an edge, we check whether the $v$-entry of the row of $u$ is $1$.
It is actually known that the family of all graphs has an $n/2+O(1)$ adjacency labeling scheme, which is optimal, up
to an additive constant~\cite{MR3613451, MR3899158}. Results on trees~\cite{MR3702457}, 
cycles and paths~\cite{MR4107731}, and bounded-degree graphs~\cite{MR3685868, MR3238387} are also known.
It is also known that planar graphs have adjacency labeling schemes of size $(1+o(1))\log_2 n$~\cite{MR4402353}.

The \emph{speed} of a family of graphs is the function $s(n)$ that maps an integer $n$ to the 
number of graphs on the vertex set $[n]$ in the family. Observe that if there exists a $t(n)$-bit adjacency labeling scheme for a family of graphs, then $s(n) \leq 2^{n t(n)}$.
Therefore, for a family of speed $s(n)$, the function $\frac 1n \log s(n)$ is a lower bound on the 
minimum size of an adjacency labeling scheme.

A family of graphs is said to be \emph{hereditary} if for any graph in the family, all its induced 
subgraphs are also in the family. Kannan et al.~\cite{MR1186827} conjectured that every hereditary 
family with speed $2^{O(n\log n)}$ has an $O(\log n)$-bit adjacency labeling scheme. Hatami and 
Hatami~\cite{MR4537287} refuted the conjecture by showing that for every $\delta > 0$, there is a 
hereditary family of graphs with speed $2^{O(n\log n)}$ but with no $n^{1/2-\delta}$-bit adjacency labeling scheme.

\subsection{Semialgebraic graphs}

A simple graph $G=(V,E)$ is a \emph{semialgebraic graph} of description complexity $t$ and dimension $d$ if the
vertices in $V$ can be mapped to points in $\reals^d$, so that the presence of an edge is defined by 
a Boolean combination of the sign patterns of $t$ $2d$-variate polynomial functions, of degrees at most $t$,
of the corresponding pair of image points.

More precisely, we have $V\subset\reals^d$, and there exist $t$ $2d$-variate polynomials 
$f_1,\ldots, f_t\in \reals[x_1,\ldots, x_d, y_1,\ldots ,y_d]$,
each of degree at most $t$, and a Boolean function $\Phi$ on $3t$ variables 
such that for any $u,v \in V$,
\[
(u,v) \in E \Leftrightarrow \Phi\left(\{f_i(u,v)<0,f_i(u, v)=0, f_i(u, v)\leq 0\}_{i\in [t]}\right).
\]

Note that if the graph is undirected, as we generally assume in this work, the predicate has to be symmetric, 
in the sense that exchanging $u$ and $v$ yields the same truth value. 
A family of graphs is said to be a \emph{$d$-dimensional semialgebraic family} if every graph in the family is
semialgebraic with the same parameters $d, t, f_1, \ldots ,f_t$, and $\Phi$, where $d$ and $t$ are constants
(so the dependence on $t$, the $f_i$, and $\Phi$ is suppressed in this notation).

Semialgebraic families include many well-studied families, such as geometric intersection graphs.
Among the simplest examples, one can think of \emph{unit disk graphs}, which are intersection graphs 
of unit disks in the plane. There every vertex $v$ can be assigned the coordinates of 
the center of the corresponding disk, and the function $\Phi$ simply encodes the condition that two 
centers are at distance at most two. Many standard problems and tools from graph theory, such as 
regularity lemmas~\cite{MR2983197,MR3585030,MR4863535}, Erd\H{o}s-Hajnal properties~\cite{MR2156215},
Ramsey-Tur\'an numbers~\cite{MR3907792}, {Z}arankiewicz's problem~\cite{MR3646875}, and
Ramsey-type results~\cite{MR3217709} have been studied in a semialgebraic setting.

When the polynomials $f_1,f_2,\ldots ,f_t$ in the above definition are all linear, the graphs are 
said to be \emph{semilinear}. Families of semilinear graphs include many well-studied families of 
graphs, such as interval graphs, permutation graphs, and bounded-boxicity graphs; see 
Section~\ref{sec:semilinear} for details. Again, several standard problems in graph theory have been 
studied recently in the semilinear setting~\cite{MR4308822, MR4591829}.

The question of the existence of implicit representations of small size 
for semialgebraic families has been recently raised by Alon~\cite[Section 3, paragraph 3]{MR4800729}.
See Section~\ref{sec:disc} for a discussion of known results.

\subsection{Adjacency labeling schemes via biclique decompositions}

A \emph{biclique} is a complete bipartite graph.
A \emph{biclique decomposition} of a graph is a partition of its edge set into complete bipartite subgraphs.
Biclique decompositions are ubiquitous in computational geometry, where they are typically obtained as 
a byproduct of offline range searching data structures~\cite{AES24,MR1739610,BK03,CY25,cha06,E96,MR1471987,MR1220545}.
Biclique decompositions are also useful as compressed representations of graphs on which some algorithms 
(such as breadth-first search) can be run efficiently~\cite{CCCK24,FM95}.

The \emph{size of a biclique decomposition} is the sum of the number of vertices of the bicliques in the decomposition.
Note that this is not the same as the number of bicliques of the decomposition, a parameter which has also been studied,
and whose behavior, even asymptotic, may be different from the size.
The minimum size of a biclique decomposition of a given graph is sometimes referred to as the \emph{representation
complexity} of the graph.
Indeed, up to logarithmic factors, this size is an upper bound on the number of bits needed to encode the graph;
see below.
Chung, Erd\H{o}s, and Spencer~\cite{MR820214} proved that every $n$-vertex graph has a biclique decomposition of size at most $((\log e)/ 2 + o(1))n^2/\log n$, which is tight up to a factor $1/e$.
Erd\H{o}s and Pyber~\cite{MR1452952} further showed that every graph has a biclique decomposition such that every vertex belongs to at most $O(n/\log n)$ bicliques of the decomposition.
The size of a biclique decomposition also plays a key role in bounds for Zarankiewicz's problem, on the maximum number of edges
of bipartite graphs forbidding large bicliques~\cite{MR3646875}. Indeed, it can easily be shown that if a $K_{t,t}$-free
bipartite graph has a biclique cover of size $s$, then it has at most $st$ edges.

Our key observation is that adjacency labeling schemes can be defined from biclique decompositions as follows:
For each vertex $v$, give the list of bicliques $v$ belongs to, and for each of these bicliques, the side of the
bipartition that contains $v$. The adjacency test simply consists of intersecting the two given lists, 
and checking that the two vertices are on opposite sides of one of their common bicliques, if any.
An example is given in Figure~\ref{fig:example2}.

The length of these labels in bits is $\nu (n)\cdot O(\log n)$, where $\nu (n)$ is an upper bound on the number of
bicliques a vertex belongs to. The factor $O(\log n)$ is the number of bits needed to identify a biclique.
A single bit per biclique suffices to indicate the side containing the vertex.
It is also immediate that by sorting the list of bicliques in increasing order of their labels, 
adjacency can be tested in time linear in the length of the lists that contain the pair of vertices.
The result of Erd\H{o}s and Pyber~\cite{MR1452952} yields an $O(n)$ bound on the label size for arbitrary graphs.
Note that this labeling scheme needs not be injective.
For instance, if the whole graph is a biclique, then only two distinct labels are needed.
If we insist on having an injective scheme, we can simply append a $\lceil\log n\rceil$-size 
identifier of a vertex to its label, so that all labels are distinct.

A related technique, with a constant number of graphs covering each vertex, has been used for adjacency labeling under the name \emph{locally bounded coverings}~\cite{MR3279268}.

\begin{figure}
  \begin{center}
    \includegraphics[page=2]{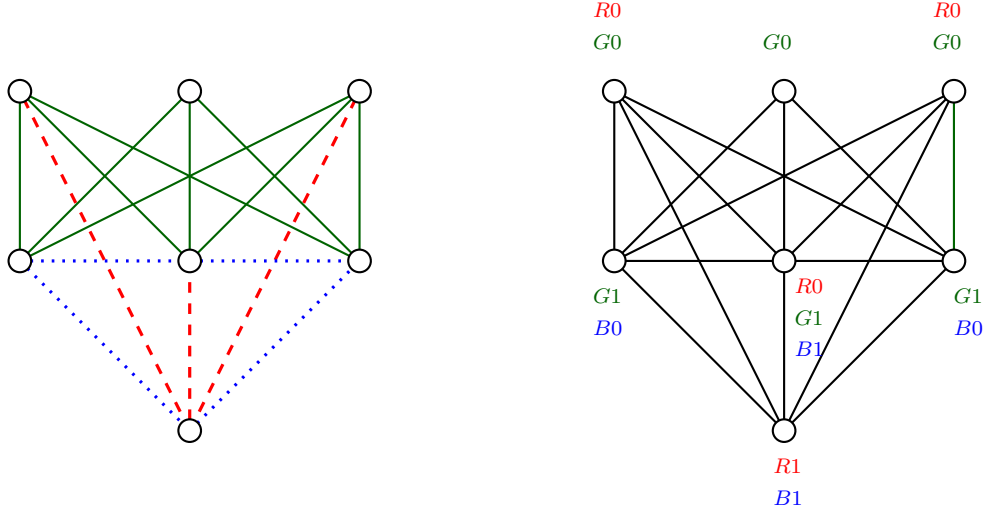}
  \end{center}
  \caption{\label{fig:example2}{\small{\sf Example of an adjacency labeling of a graph with 16 edges (right) obtained
  from a biclique decomposition of size 14 (left). The label $\ell(v)$ of a vertex $v$ consists of the list of
  bicliques $v$ is contained in, from the set $\{R,G,B\}$, together with an additional bit indicating on which side of
  the biclique the vertex $v$ lies.}}}
\end{figure}

\subsection{Our results}
\label{sec:results}

The main result of this paper is that the recent polynomial partitioning strategies used in 
biclique decomposition algorithms for semialgebraic graphs can be adapted to ensure that the 
upper bound $\nu (n)$ on the number of bicliques to which every vertex belongs is strongly sublinear, thereby providing a semialgebraic version of the Erd\H{o}s-Pyber Theorem~\cite{MR1452952}.
As a consequence, we can give a sublinear bound 
on adjacency labels for semialgebraic families. 

\begin{restatable}{thm}{semialgebraic}
  \label{thm:main}
  $d$-dimensional semialgebraic families have an $O(n^{1-2/(d+1) + \eps})$-bit adjacency labeling scheme, where $\eps > 0$ can be made arbitrarily small and the constant of proportionality depends on $\eps$ and on the complexity of the adjacency-defining predicate.
\end{restatable}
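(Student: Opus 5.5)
The plan is to build a biclique decomposition of any $n$-vertex graph in the family in which every vertex lies in at most $\nu(n)=O(n^{1-2/(d+1)+\eps})$ bicliques. Then I apply the labeling from biclique decompositions described above. Its labels have $\nu(n)\cdot O(\log n)$ bits, and the $\log n$ factor is absorbed by slightly enlarging $\eps$. The decomposition comes from a divide-and-conquer scheme on the bipartite problem $(P,Q)$, where $P$ and $Q$ are both copies of $V$ and each edge is recorded once. The goal is the standard semialgebraic range-searching structure, with the recursion arranged so that the \emph{per-vertex} load, and not only the total size, is small.

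\textbf{Step 1 (dual partition on $Q$).} Each $q\in Q$ defines a semialgebraic set $\gamma_q=\{x\in\reals^d : \text{$(x,q)$ satisfies } \Phi\}$. Its boundary is contained in the zero sets $Z(f_i(\cdot,q))$, and there are $O(1)$ of these, each of bounded degree. I take a polynomial partition of $P$ in $\reals^d$ of degree $D$ (a large constant depending on $\eps$), via Guth--Katz/Guth partitioning for varieties or the Agarwal--Aronov--Ezra--Zahl partitioning. This yields $O(D^d)$ cells with at most $|P|/D^d$ points each. Each zero set crosses only $O(D^{d-1})$ cells. So for each cell $\omega$, each $q$ either fully contains $\omega$ in $\gamma_q$, fully avoids it, or crosses it. The pairs $(P\cap\omega, \{q : \omega\subseteq\gamma_q\})$ become bicliques. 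The crossing $q$ recurse with $P\cap\omega$. Points of $P$ that lie on $Z(F)$ are handled by recursing within the lower-dimensional variety, as in Guth's multilevel partition. Each $q$ enters $O(1)$ bicliques per cell, which is $O(D^d)$ per level. Each $p$ enters one biclique per level.

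\textbf{Step 2 (balancing via primal/dual switching).} A pure recursion on $P$ gives recurrence exponents like $n^{1-1/d}$. To reach $n^{1-2/(d+1)}$, I use the usual trade-off. I treat the subproblem with $m=|P_\omega|$ and $n'$ crossing $Q$-points, and count incidence-like quantities. With partition parameter $r$, the subproblems have sizes $(m/r^d,\; n'/r)$, summed over $r^d$ cells. I then swap roles, partitioning $Q$ in the dual space $\reals^d$ with respect to the sets $\{p:(p,q)\in E\}$. A vertex of $P$ is charged once per level in which it belongs to a subproblem. Hence $\nu$ for a vertex equals the number of subproblems containing it times $O(D^d)$. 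The bound I aim for is $\nu(m,n')\lesssim (m n')^{\frac{d-1}{d+1}}\cdot$ polylog, maybe with $n^\eps$. At $m=n'=n$ this becomes $n^{1-2/(d+1)+\eps}$, matching the Zarankiewicz-type exponent $n^{2d/(d+1)}$ on total size divided by $n$. This exponent choice is the one that makes the two-sided recursion terminate at balanced sizes. When $m\ge n'^{d}$ or the reverse holds, the unbalanced base case is handled by one-sided partitioning, or by the trivial labeling (each vertex stores its adjacency row) on small subproblems.

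\textbf{Main obstacle.} The hard step is controlling the \emph{maximum} per-vertex multiplicity rather than the average. A point of $Q$ can cross many cells at every level. Over $k$ levels it appears in $O(D^{(d-1)k})$ subproblems, which blows up unless the switching is arranged carefully. I would first try to bound the number of subproblems containing a fixed vertex $q$ by a recurrence on depth. At each level a vertex is in at most $O(D^{d-1})$ children (crossing) out of $O(D^d)$. After descending to depth where $|P_\omega|\approx n^{2/(d+1)}$ (chosen so the dual step then balances), the count is $(n^{1-2/(d+1)})^{(d-1)/d}$-ish. The dual phase then contributes symmetrically, and at the leaves the trivial row labeling costs $|P_\omega|$ bits. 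I expect the exponents to combine to $n^{1-2/(d+1)}$, with $D^{O(\text{depth})}=n^{\eps}$ overhead. Points lying on zero sets of partitioning polynomials are the second delicate issue, and I would handle them by induction on the dimension of the ambient variety.
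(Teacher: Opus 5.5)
\textbf{There is a genuine gap: the balancing step is asserted, not proved, and the schedule you sketch fails.}

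Your framework is the right one. It has a biclique decomposition with small \emph{maximum} multiplicity, polynomial partitioning with the special cell handled by induction on the dimension of the ambient variety, and a switch between the primal and dual settings. But the step that actually produces the exponent $1-2/(d+1)$, namely the schedule of that switch, is only asserted (``I expect the exponents to combine''). The concrete schedule you sketch does not work, for the following reasons.
\begin{itemize}
\item Your Step 1 is a point-count partition (as in Lemma~\ref{lem:mp}). It gives no control on the number $|Q_\omega|$ of elements of $Q$ whose ranges cross a leaf cell $\omega$, and this can be $\Theta(n)$.
\item A dual point-count partition of $Q_\omega$ then costs each $p\in P_\omega$ about $|Q_\omega|^{1-1/d}$ nodes, up to $n^{1-1/d}$. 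This is no better than Theorem~\ref{thm:weaker}.
\item Trivial row labels at the leaves cost each $p$ up to $|Q_\omega|$ bits. They cost each $q$ about $n^{2/(d+1)}$ bits in each of up to $n^{(d-1)^2/(d(d+1))}$ leaves, for a total of $n^{(d^2+1)/(d(d+1))}\gg n^{(d-1)/(d+1)}$.
\item A crossing-controlled partition in the dual (Lemma~\ref{lem:aaez}) down to constant size costs each $p$ about $|P_\omega|^{d-1}$ nodes. At your threshold $|P_\omega|\approx n^{2/(d+1)}$ this is $n^{2(d-1)/(d+1)}$.
\item Your target $(mn')^{(d-1)/(d+1)}$ is also off: at $m=n'=n$ it equals $n^{2(d-1)/(d+1)}$, not $n^{1-2/(d+1)}$.
\item The $n^{\eps}$ loss is not $D^{O(\mathrm{depth})}$, which is polynomial in $n$. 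It is $c^{\mathrm{depth}}$, with $c$ the constant in the $(cD)^{k-1}$ crossing bound of Lemma~\ref{lem:mt}. This is why one needs $D_k^{\eps}\ge 2c_k^{k-1}$ together with the hierarchy $D_{k-1}\gg D_k$.
\end{itemize}

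\textbf{The missing idea is an asymmetric two-phase schedule.} Each vertex is a \emph{point} in one phase, lying on a single path of $O(\log n)$ nodes, and a \emph{range} in the other. The range multiplicity in the second phase must depend only on the side whose size the first phase controls.

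In the paper's notation, with $S$ in the role of your $Q$, the argument runs as follows.
\begin{itemize}
\item First apply Lemma~\ref{lem:aaez} to the points $S$ with respect to the ranges $p^*$. Stop once each cell is crossed by at most $N=n^{d/(d+1)}$ of them.
\item The recurrence $\nu(n,k)\le (c_kD_k)^{k-1}\nu(n/D_k,k)+\nu(n,k-1)$ gives each $p$ multiplicity $(n/N)^{d-1+\eps}\le n^{(d-1)/(d+1)+\eps}$. Each $s$ reaches exactly one leaf.
\item At each leaf $v$, partition the at most $N$ surviving points $P_v$ with Lemma~\ref{lem:mp} down to constant size. Each $s^*$ then appears in $N^{1-1/d+\eps}=n^{(d-1)/(d+1)+O(\eps)}$ nodes, however large $S_v$ is.
\item Each $s$ pays this cost only once. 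Each $p$ pays $O(\log n)$ per phase-one leaf it reaches.
\end{itemize}

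Your order can also be repaired. Point-partition only down to $|P_\omega|\le n^{1/(d+1)}$, and then use the crossing-controlled partition in the dual down to constant size. This balances $(n/|P_\omega|)^{1-1/d}$ for each $q$ against $|P_\omega|^{d-1}$ for each $p$, and both equal $n^{(d-1)/(d+1)}$.
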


This implies the following, among many similar, corollaries~:
\begin{corollary}
  Unit disk graphs have an $O(n^{1/3 + \eps})$-bit adjacency labeling scheme, for any $\eps > 0$. 
  Disk graphs (for disks with arbitrary radii) have an $O(n^{1/2 + \eps})$-bit adjacency labeling scheme, for any $\eps > 0$. 
  Unit ball graphs (intersection graphs for congruent balls) in $\reals^d$ have an 
  $O(n^{1-2/(d+1) + \eps})$-bit adjacency labeling scheme, for any $\eps > 0$. 
  Ball graphs (for balls with arbitrary radii) in $\reals^d$ have an $O(n^{1-2/(d+2) + \eps})$-bit 
  adjacency labeling scheme, for any $\eps > 0$. 
\end{corollary}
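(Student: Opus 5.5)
The plan is to derive each claim of the Corollary directly from Theorem~\ref{thm:main}. For each family we exhibit it as a $d'$-dimensional semialgebraic family with constant description complexity, and then substitute $d'$ into the exponent $1-2/(d'+1)+\eps$. Since every graph in each family is given by a list of balls, we map vertices to parameter points, and the only real work is picking the smallest parameter dimension $d'$.

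\textbf{Unit disk graphs.} Map each vertex to the center $c\in\reals^2$ of its disk. Adjacency is the single polynomial predicate $\|u-v\|^2-4\le 0$, which has degree $2$ and is symmetric. So $d'=2$, and Theorem~\ref{thm:main} gives $O(n^{1-2/3+\eps})=O(n^{1/3+\eps})$ bits.

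\textbf{Disk graphs with arbitrary radii.} Map each disk to $(x,y,r)\in\reals^3$. Two disks intersect if and only if
\[(x_u-x_v)^2+(y_u-y_v)^2-(r_u+r_v)^2\le 0,\]
again one symmetric polynomial of degree $2$. So $d'=3$, and the theorem gives $O(n^{1-2/4+\eps})=O(n^{1/2+\eps})$ bits.

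\textbf{Balls in $\reals^d$.} The same reasoning applies in higher dimension. For unit balls we use only the centers in $\reals^d$ and the predicate $\|u-v\|^2\le 4$, so $d'=d$ and the bound is $O(n^{1-2/(d+1)+\eps})$. For balls with arbitrary radii we use (center, radius) $\in\reals^{d+1}$ and the predicate $\|c_u-c_v\|^2\le (r_u+r_v)^2$, so $d'=d+1$ and the bound is $O(n^{1-2/(d+2)+\eps})$.

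\textbf{Constants.} In every case the number of polynomials, their degree, and the Boolean formula $\Phi$ are fixed and independent of $n$. Hence the constant of proportionality depends only on $\eps$ and $d$, as Theorem~\ref{thm:main} requires.

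There is no genuine obstacle. The only point needing care is choosing the parametrization dimension: it must be the number of degrees of freedom of the shape ($d$ for congruent balls, $d+1$ when radii vary), and not the ambient dimension.
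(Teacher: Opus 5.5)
Your proposal is correct and matches the paper, which states the corollary as an immediate consequence of Theorem~\ref{thm:main} with no separate proof: unit disks and unit balls are parametrized by their centers ($d'=2$, resp.\ $d'=d$), arbitrary disks and balls by center and radius ($d'=3$, resp.\ $d'=d+1$), and adjacency is a single constant-degree predicate such as $\|c_u-c_v\|^2\le (r_u+r_v)^2$, which is valid because radii are nonnegative. One small quibble: your closing ``must'' is too strong, since any valid semialgebraic parametrization can be used and the paper notes that the reduced parametric dimension can sometimes be smaller than the number of degrees of freedom, though this does not affect the bounds claimed here.
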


\begin{corollary}
  Segment intersection graphs have an $O(n^{1/3 + \eps})$-bit adjacency labeling scheme, for any $\eps > 0$.
\end{corollary}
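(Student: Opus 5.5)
The plan is to avoid the natural $4$-dimensional parametrization, which would only give $n^{1-2/5+\eps}=n^{3/5+\eps}$ through Theorem~\ref{thm:main}. Instead I would write segment intersection as a conjunction of conditions, each of which is a point-versus-region relation in the \emph{plane}. I would then build a multilevel biclique decomposition in which every segment belongs to $O(n^{1/3+\eps})$ bicliques. The labels follow from the biclique-decomposition construction described in the introduction. That construction costs an extra $O(\log n)$ factor per biclique, which is absorbed into $n^{\eps}$.

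\emph{Step 1: reduce to planar conditions.} Write $s=ab$ and $t=cd$. Two segments that are not collinear intersect if and only if both of the following hold:
\begin{itemize}
\item[(i)] $a$ and $b$ lie weakly on opposite sides of the line $\ell(t)$;
\item[(ii)] $c$ and $d$ lie weakly on opposite sides of the line $\ell(s)$.
\end{itemize}
Under point-line duality, (i) says that the dual point $\ell(t)^*\in\reals^2$ lies in the closed double wedge $W_s$ bounded by the lines $a^*$ and $b^*$. Likewise, (ii) says that $\ell(s)^*\in W_t$. Each condition is therefore a relation between points in $\reals^2$ and regions of constant description complexity.

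The collinear pairs need separate treatment. They split into classes, one per common supporting line, and within each class the graph is an interval graph. Such a class is handled by a biclique decomposition with $O(\log n)$ bicliques per vertex, for instance via a segment tree on the interval endpoints. So I would concentrate on non-collinear pairs and make the predicates in (i) and (ii) consistent with the closed intersection predicate.

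\emph{Step 2: the planar point--region lemma.} For $m$ points in $\reals^2$ and $n$ constant-complexity regions, I would establish a biclique decomposition of the containment relation with the following per-vertex bounds. Each region lies in $O(m^{1/3+\eps}+\dots)$ bicliques, and each point in a comparable number, with the bound $O(N^{1/3+\eps})$ when $m,n\le N$. This is exactly the two-dimensional instance of the partitioning argument behind Theorem~\ref{thm:main}, so I take it as available from that proof.

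The argument for the lemma runs as follows.
\begin{itemize}
\item Take a polynomial partition of the points with $r$ cells.
\item Each region boundary crosses $O(r^{1/2})$ cells.
\item For every cell fully contained in a region, emit the canonical biclique (cell's points, regions containing the cell).
\item Recurse on crossing pairs, and dualize to swap roles when the two sides become unbalanced.
\item Balancing the two roles gives the exponent $1-2/3=1/3$.
\end{itemize}

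\emph{Step 3: the two-level composition.}
\begin{itemize}
\item Apply the lemma to relation (i), with points $\ell(t)^*$ and regions $W_s$. This gives bicliques $(A_i,B_i)$ with $A_i,B_i$ sets of segments.
\item Inside each $(A_i,B_i)$, apply the lemma to relation (ii) with the roles swapped: points $\ell(s)^*$ for $s\in B_i$ and regions $W_t$ for $t\in A_i$.
\item The final bicliques are the second-level ones. They partition exactly the intersecting non-collinear pairs.
\end{itemize}

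\emph{Main obstacle: bounding the two-level count.} The hard step is bounding the number of final bicliques containing a fixed segment $v$. This number equals $\sum_i O\bigl((|A_i|+|B_i|)^{1/3+\eps}\bigr)$, summed over the first-level bicliques that contain $v$. Naively multiplying the two levels gives $n^{2/3}$.

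To avoid this, I would use the standard multilevel trick. Take the first-level partition with a large constant fan-out $r$. Then the canonical sets that $v$ meets at recursion depth $j$ have size about $n/r^{j}$, while their number grows only like $r^{O(j)/2}$ per level. The per-vertex sum then behaves like a geometric series
\[
\sum_j r^{cj}\,(n/r^{j})^{1/3+\eps},
\]
which is dominated by its top term up to a factor $n^{O(\eps)}$. The point to verify carefully is that the exponents really make the series decreasing once the point-side and region-side recursions are balanced. If that fails, I would instead fold condition (ii) into the recursion of condition (i) itself, that is, partition simultaneously in both dual planes. I would then check that the resulting recurrence still solves to $O(n^{1/3+\eps})$.
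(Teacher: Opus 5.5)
Your proposal has a genuine gap, and it sits at the step you flag yourself. You never establish the per-vertex bound for the two-level composition in Step~3, and the heuristic you offer does not hold as written. In a planar partition with $r$ cells, a region boundary can meet $\Theta(r^{1/2})$ cells. The term $r^{j/2}(n/r^{j})^{1/3}=n^{1/3}r^{j/6}$ therefore \emph{increases} with $j$, so the series is dominated by its deepest level and is of order $n^{1/2}$.

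The canonical pairs $(A_i,B_i)$ are also not both of size about $n/r^j$. In the range-driven phase of the construction behind Theorem~\ref{thm:main}, the point side of a node is not controlled at all. The second-level cost for a vertex depends on both $|A_i|$ and $|B_i|$. So bounding the sum over the first-level bicliques containing $v$ needs an argument that the proof of Theorem~\ref{thm:main} does not supply.

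Step~2 also overstates what that proof gives. It needs both sides of the relation to be parametrized by $\reals^2$, because one phase partitions the parameter space of the regions. The double wedge $W_s$ is parametrized by $(a,b)\in\reals^4$, so relation (i), taken as a single point--region relation, is not a $d=2$ instance. You would have to split it into the half-plane conditions relative to $a^*$ and $b^*$. That adds further levels to a multilevel structure you already cannot bound.

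The missing idea is that you do not need a single biclique decomposition of the whole graph. Adjacency labels compose under a fixed Boolean combination simply by concatenation. Consider each orientation predicate, such as ``the endpoint $c$ of $t$ lies on a given side of $\ell(s)$'', and its symmetric counterparts. Each is a bipartite semialgebraic relation between the supporting line of one segment and an endpoint of the other. Both sides have two parameters, after splitting off an orientation bit and handling vertical lines. So the proof of Theorem~\ref{thm:main} with $d=2$ gives $O(n^{1/3+\eps})$-bit labels for each of these constantly many relations.

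Label every segment by the concatenation of these labels. For the collinear case, add an identifier of its supporting line and an $O(\log n)$-bit interval-graph label, as you already suggest. The decoder evaluates each predicate and combines the results by (i)$\wedge$(ii). This is the paper's argument: the reduction to a single polynomial at the start of Section~\ref{sec:graph}, together with the observation that segments have reduced parametric dimension~$2$. It makes your Step~3 unnecessary.
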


The second corollary requires justification, since the number of degrees of freedom, also called the \emph{parametric
dimension} of segments in the plane is $4$.
Nevertheless, the predicate that tests whether two segments $e$ and $e'$ intersect one another is a disjunction of
conjunctions, each consisting of four polynomial inequalities, each involving at most two parameters of $e$ and at
most two parameters of $e'$ (each such inequality determines the side of the line that is spanned by one segment that
contains an endpoint of the other segment). In the terminology of \cite{AAEKS25}, we say that the 
\emph{reduced parametric dimension} of segments, with respect to segment intersection, is only $2$.
It is easy to see, and will follow from the analysis given later, that the bound in the theorem uses
$d=2$ instead of $4$, from which the corollary follows. This observation is general, and calls for
a more careful definition of $d$-dimensional semialgebraic graphs, using the reduced
parametric dimension of the objects (see Agarwal \etal~\cite{AAEKS25} and Sauermann~\cite{MR4205109}).

Recall that a graph is semilinear if it is semialgebraic and the $t$ defining polynomials are linear.
We prove that semilinear families admit much more concise implicit representations.

\begin{restatable}{thm}{semilinear}
  \label{thm:semilinear}
  Semilinear families have an $O(\log n)$-bit adjacency labeling scheme.
\end{restatable}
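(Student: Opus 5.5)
The plan is to reduce a semilinear predicate to a Boolean combination of dominance-type comparisons, and then label each vertex by its ranks in a constant number of linear orders.

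First, write each defining polynomial as $f_i(u,v)=a_i\cdot u + b_i\cdot v + c_i$ with $a_i,b_i\in\reals^d$ and $c_i\in\reals$ fixed. The sign of $f_i(u,v)$ equals the outcome of comparing the two reals $g_i(u):=a_i\cdot u$ and $h_i(v):=-b_i\cdot v-c_i$. So every atom in $\Phi$ ($<$, $=$, or $\leq 0$) is determined by the relative order of $g_i(u)$ and $h_i(v)$.

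For the labeling, fix a graph on the point set $V$ with $|V|=n$. For each $i\in[t]$, consider the multiset $S_i=\{g_i(w): w\in V\}\cup\{h_i(w): w\in V\}$ of at most $2n$ reals. Sort $S_i$ and replace each value by its rank among the distinct values, an integer in $[2n]$. The comparison between $g_i(u)$ and $h_i(v)$, including ties, is then identical to the comparison between their ranks. The label $\ell(v)$ consists of the $2t$ ranks $\mathrm{rank}_i(g_i(v))$ and $\mathrm{rank}_i(h_i(v))$ for $i\in[t]$, each stored with $\lceil\log_2(2n+1)\rceil$ bits. This gives $2t\cdot O(\log n)=O(\log n)$ bits, since $t$ is a constant of the family.

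The decoder $A$ reads the ranks of $g_i(u)$ and $h_i(v)$ from the two labels and recovers the sign of each $f_i(u,v)$. It then evaluates the fixed Boolean function $\Phi$. Since $\Phi$ and $t$ depend only on the family, $A$ is a single computable function that does not depend on the graph, and it runs in time linear in the label length. Because the predicate is symmetric, it makes no difference which label plays the role of $u$, and one can use either order. If injectivity is wanted, append $\lceil\log n\rceil$ identifier bits.

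The only delicate step is handling ties correctly. Equalities $f_i=0$ must be preserved, which is why the ranks are taken over the merged set $S_i$ with equal values receiving equal ranks, rather than over separate rankings of the $g$-values and the $h$-values. Once that is done, the construction is exact and no approximation is involved.
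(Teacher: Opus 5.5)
Your proof is correct and uses the same core idea as the paper: split each linear $f_i(u,v)$ as $g_i(u)+h_i(v)$, and label each vertex by the ranks of these separated values in a merged sorted order. The paper first invokes Tomon's disjunctive normal form with only strict inequalities, so that each conjunction becomes an $\ell$-dimensional bipartite comparability graph; your tie-preserving merged ranks recover the full sign of each $f_i$ directly and let you evaluate $\Phi$ as given, so that normalization step is unnecessary.
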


Note that many well-studied graph families are semilinear: cographs, interval graphs, permutation graphs,
bounded boxicity graphs, circle graphs (intersection graphs of chords of a circle), distance-hereditary graphs
(that preserve graph distances under taking connected induced subgraphs).
While $O(\log n)$-bit adjacency labeling schemes are already known separately for the mentioned classes~\cite{MR1971502,MR2016489,MR4485472}, we obtain this result for all semilinear graphs in an unified way.
In a nutshell, the proof of Theorem~\ref{thm:semilinear} involves a decomposition of the semilinear graph
into a constant number of \emph{comparability graphs} of bounded dimension, as is done in Tomon~\cite{MR4591829}
and Cardinal and Yuditsky~\cite{CY25}, each of which is encoded by a simple scheme involving the rank of
each vertex in each of the dimensions.

We also consider implicit representations of visibility graphs~\cite{dBCvKO08,G07,OR17}.
\emph{Polygon visibility graphs} are graphs whose vertices are in bijection with the vertices of a simple closed polygon,
so that two vertices are adjacent if and only if the line segment between them is contained in the polygon.
Note that polygon visibility graphs do not form a semialgebraic family, in the sense defined above, 
since the adjacency of two vertices depends
on the location of the other vertices. In that case, however, a well-known result of Agarwal, Alon, Aronov, and
Suri~\cite{MR1298916} allows us to obtain an efficient labeling scheme.

\begin{restatable}{thm}{polygon}
  \label{thm:polygon}
  Polygon visibility graphs have an $O(\log^3 n)$-bit adjacency labeling scheme. 
\end{restatable}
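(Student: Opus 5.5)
The plan is to build a biclique decomposition of the visibility graph in which every vertex lies in only $O(\log^2 n)$ bicliques. The labeling scheme from biclique decompositions described above then gives labels of $O(\log^2 n)\cdot O(\log n) = O(\log^3 n)$ bits: each vertex lists the identifiers of its bicliques, each taking $O(\log n)$ bits, together with one side bit per biclique. The tool is the result of Agarwal, Alon, Aronov, and Suri~\cite{MR1298916}, which shows that the visibility graph of a simple polygon $P$ with $n$ vertices has a biclique decomposition of total size $O(n\log^3 n)$. Our task is to redo that construction so that the guarantee is per vertex rather than in total.

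\textbf{Step 1: balanced diagonal splitting.} Recall that every simple polygon has a diagonal that splits it into two subpolygons, each with at most about $2n/3$ vertices. Applying this recursively gives a balanced hierarchy of depth $O(\log n)$. Each polygon vertex belongs to $O(\log n)$ subpolygons along a root-to-leaf path, plus the subpolygons where it is an endpoint of the splitting diagonal. I expect that a careful choice of diagonals avoids overloading any single vertex; if it does not, I will charge splitting vertices separately. Now take two vertices $u,v$ that are visible to each other. Either $uv$ is a side of the separating diagonal at some node, or there is a unique highest node at which $u$ and $v$ fall on opposite sides of the diagonal $\delta$. In the latter case the segment $uv$ crosses $\delta$. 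Visibility edges are therefore partitioned according to this node.

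\textbf{Step 2: edges crossing a single diagonal.} Fix a node with diagonal $\delta$ and sides $P_1, P_2$. For a vertex $u\in P_1$, the part of $P_2$ that $u$ sees through $\delta$ is described by an interval of directions, equivalently by a pair of points on $\delta$ (the window). Then $u$ sees $v\in P_2$ exactly when the segment $uv$ meets $\delta$ inside both the window of $u$ and the window of $v$. These windows come from shortest-path (funnel) structures toward the endpoints of $\delta$. This reduces the problem to a bipartite graph that is a two-dimensional order-type or interval-type condition, close to a semilinear condition. Within this node we decompose it with a two-level divide and conquer on the order of vertices along each side, as in~\cite{MR1298916}. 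This should place each vertex in $O(\log n)$ bicliques per node, giving $O(\log^2 n)$ over its $O(\log n)$ nodes.

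\textbf{Main obstacle.} The hard part is Step 2: making sure the crossing-diagonal decomposition places each vertex, not just each vertex on average, in only polylogarithmically many bicliques. The paper's claimed $O(\log^3 n)$ suggests a total per-vertex count of $O(\log^2 n)$ bicliques, each identified by $O(\log n)$ bits. If the AAAS construction has an extra logarithmic level, then only identifiers local to a node are needed. In that case I would instead encode a biclique by its node on the path (which is implicit from the position in the sorted list) plus an $O(\log n)$-bit local index, keeping the label at $O(\log^3 n)$ bits. I will first check that, in the AAAS argument, each vertex participates at each recursion level in only $O(1)$ canonical sets of the underlying range-tree-like structure.
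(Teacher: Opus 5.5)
There is a genuine gap, and it sits where you placed your ``main obstacle''. Nothing in the proposal gives $O(\log^2 n)$ bicliques, or $O(\log^2 n)$ bits, per vertex.

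Your window reformulation is correct. Vertex $u$ sees $v$ iff the point where segment $uv$ crosses $\delta$ lies in $W(u)\cap W(v)$, and each window is an interval. The next claim is unsupported: you say this is a ``two-dimensional order-type or interval-type condition, close to a semilinear condition''. In fact the crossing point of line $uv$ with $\delta$ is a rational, not linear, function of the coordinates of $u$ and $v$. What you actually get, after point--line duality, is a bichromatic segment intersection graph: vertex $u$ becomes the segment of lines through $u$ and $W(u)$. If the cross-diagonal graph were a bounded-dimensional comparability or semilinear graph, you would get $O(\log n)$ bits per level and $O(\log^2 n)$ in total, which is stronger than the theorem.

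The claim of ``$O(\log n)$ bicliques per vertex per node'' is also unjustified. A two-level divide and conquer puts an element in $O(\log^2 n)$ canonical sets. The AAAS construction, counted vertex by vertex, gives $O(\log n)$ polygon levels $\times$ $O(\log n)$ segment-tree nodes $\times$ $O(\log n)$ canonical subsets, that is, $O(\log^3 n)$ bicliques. This matches their $O(n\log^3 n)$ total, and your target would improve that total to $O(n\log^2 n)$ with no mechanism offered. Your fallback does not rescue the count: $O(\log^3 n)$ bicliques, each named by an $O(\log n)$-bit local index, cost $O(\log^4 n)$ bits.

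The missing idea is to stop listing bicliques at the innermost level. Build a hereditary segment tree over the $x$-projections of the dual segments.
\begin{itemize}
\item Each segment is stored at $O(\log n)$ nodes, as short or long.
\item Every red--blue crossing lies in a slab where at least one of the two segments is stored as long.
\item Within a slab, the long segments of one colour are pairwise non-crossing and span the slab, so they are totally ordered vertically.
\item Any segment of the other colour, clipped to the slab, crosses a contiguous range of them.
\end{itemize}
So each slab contributes a two-dimensional bipartite comparability graph. Such a graph is labeled by a constant number of $O(\log n)$-bit ranks, instead of by the $O(\log n)$ canonical bicliques a range tree would produce. With an $O(\log n)$-bit node identifier, this gives $O(\log n)\cdot O(\log n)=O(\log^2 n)$ bits per level of the polygon recursion, and $O(\log^3 n)$ bits overall.

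Separately, ``charge splitting vertices separately'' needs a concrete mechanism. A diagonal endpoint goes into both subpolygons and can be chosen again, so its count can blow up. One fix: at a node with diagonal $ab$, record adjacency to $a$ and to $b$ by two bits in the labels of the other active vertices, and retire $a$ and $b$ at that node. Then each vertex follows a single root-to-leaf path of length $O(\log n)$.
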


We also obtain a slightly better $O(\log^2 n)$ bound for \emph{capped graphs}, a closely related family studied
by Davies, Krawczyk, McCarty, and Walczak~\cite{MR4562782}, which contains \emph{terrain visibility graphs},
where the polygon is replaced by an unbounded $x$-monotone polygonal curve.

\subsection{Discussion} \label{sec:disc}

\begin{figure}
  \begin{center}
    \includegraphics[page=4]{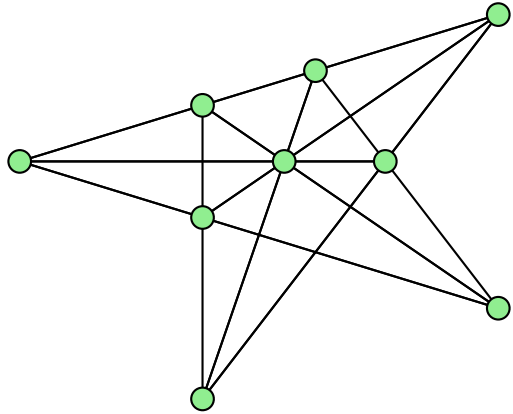}
  \end{center}
  \caption{\label{fig:perles}{\small {\sf The Perles configuration of 9 points and 9 lines, every 
    realization of which requires at least
    one irrational coordinate. This provides an example of a semialgebraic family, the bipartite point-line incidence graphs,
    for which the natural implicit representation by real numbers cannot be turned into an adjacency labeling scheme with
    labels using a bounded number of bits.}}}
\end{figure}

Note that semialgebraic graphs are already defined by an implicit representation involving points in $\reals^d$.
This representation, however, is not easily converted into a binary labeling scheme: 
There exist semialgebraic graphs, all integer realizations of which require coordinates with doubly exponentially
many bits, hence labels of exponentially many bits.
This happens even in the special case of simple geometric intersection graphs such as disk and segment intersection
graphs, see respectively McDiarmid and M\"uller~\cite{MR2995722}, and Kratochv\'il and Matou\v sek~\cite{MR1305055}.
This phenomenon can be traced back to early examples of order types requiring 
coordinates with doubly exponentially many bits, due to Goodman and Pollack~\cite{GPS89}.
In fact, the situation is even worse: There exist semialgebraic families for which the natural implicit representation
by real numbers is not always rational, hence cannot always be (na\"ively)
encoded by integer numbers. This is the case, for instance, of bipartite point-line incidence graphs, 
as witnessed by the famous Perles configuration~\cite{Z08} given in Figure~\ref{fig:perles}.
These examples all relate to the algebraic universality phenomenon first described by Mn\"ev~\cite{MR970093, MR1116375}.

Biclique decompositions of semialgebraic graphs have been studied by Do~\cite{MR4013919}.
She showed that $d$-dimensional semialgebraic graphs have biclique decompositions of size $O(n^{2d/(d+1) + \eps})$.
The same statement, together with construction algorithms, has been given in the context of semialgebraic range
searching by Agarwal et al.~\cite[Appendix A]{AAEKS25}, and in the special case of graphs defined by a single
polynomial by Cardinal and Sharir~\cite{MR4921549}. Note that for a labeling scheme defined as above from a biclique
decomposition, the sum of the lengths of the labels over all vertices is equal, up to a logarithmic factor, to the
size of the biclique decomposition. The best possible length we can expect to achieve with this method is therefore
of the order of $n^{2d/(d+1)+\eps} / n = n^{1-2/(d+1)+\eps}$, for any $\eps>0$,
when every vertex belongs to approximately the same number of
bicliques of the decomposition. Our result shows that this is always possible. In fact, we observe that several 
previously studied methods for constructing biclique decompositions actually produce balanced biclique 
decompositions, in this sense.

It is known that $d$-dimensional semialgebraic families have speed $2^{\Theta (n\log n)}$ for any fixed $d$~\cite{MR4205109},
hence this only provides a logarithmic lower bound on the length of adjacency labels. 
We leave open the problem of finding any better lower bound on adjacency labeling schemes for semialgebraic families,
or alternatively improve on the upper bound stated in Theorem~\ref{thm:main}.

The existence of sublinear-size labeling schemes for semialgebraic families was known before.
Fitch~\cite{MR3944588} was the first to prove the existence of $O(n^{1-\delta})$-bit adjacency labeling schemes for any
semialgebraic family of graphs, for some $\delta$ depending on the family. In particular, he gave an upper 
bound of $O(n^{0.976723})$ for unit disk graphs. Alon~\cite{MR4800729} proved the following~:

\begin{thm}[Alon~\cite{MR4800729}]
  \label{thm:sparseh}
  For every $\eps >0$, there is an integer $d\geq 1$ such that if a hereditary family of graphs has speed at most
  $2^{(1/4-\eps)n^2}$, then the family has an $O(n^{1-1/d}\log n)$-bit adjacency labeling scheme.
\end{thm}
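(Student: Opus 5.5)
The plan has two steps. First, show that the speed hypothesis forces every graph in the family to have bounded VC-dimension. Second, apply the spanning-path (low crossing number) technique to the set system of neighbourhoods and read the labels off the resulting order.

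\textbf{Step 1: bounded VC-dimension.} For a graph $G$, consider the set system $\{N(v) : v\in V\}$ on ground set $V$. I claim there is an integer $d_0=d_0(\eps)$ such that every graph in the family has VC-dimension less than $d_0$.

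Suppose instead that the family contains graphs of arbitrarily large VC-dimension. Then for any $k$ some graph $G$ has a shattered set $A$ of size $k$, together with a witness vertex $w_S$ for each $S\subseteq A$, satisfying $N(w_S)\cap A=S$.

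Fix a large $m$. By Ramsey's theorem, pass to a subset $A'\subseteq A$ of size $m$ that is a clique or an independent set. Since $A'$ is still shattered, for any bipartite pattern $H$ on $[m]\times[m]$ we can pick witnesses $b_1,\dots,b_m$ whose traces on $A'$ realise the rows of $H$.

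We also need to control the edges among the $b_j$ themselves. Apply Ramsey again, to a larger pool of candidate witnesses for each row type, so that the chosen $B$ is also homogeneous and its type does not depend on $H$. Alternatively, pigeonhole over the $O(1)$ possible homogeneity types.

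By heredity, the family then contains, for every bipartite graph $H$ on $m+m$ vertices, a fixed-type "split" version of $H$. These are $2^{m^2}$ distinct labelled graphs on $n=2m$ vertices, up to a factor $m!^{O(1)}$. So the speed is at least $2^{(1/4-o(1))n^2}$, contradicting the hypothesis for large $n$.

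This step is the main obstacle. The difficulty is that the Ramsey step shrinks sizes logarithmically, so the bookkeeping must ensure the shattered set is taken large enough, roughly $2^{2^{O(m)}}$, that homogeneous sets of size $m$ survive on both sides. The only genuine subtlety is that an $\eps$-gap in the exponent makes the contradiction hold for all sufficiently large $n$. I would follow the Alon--Balogh--Bollob\'as--Morris structure argument here.

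\textbf{Step 2: shatter function.} Bounded VC-dimension $d_0$ of the dual system also holds, since the dual VC-dimension is at most $2^{d_0+1}$. By the Sauer--Shelah lemma, the dual shatter function is then $O(n^{d})$ for some $d=d(\eps)$. Equivalently, the number of distinct adjacency patterns that subsets of $V$ induce is polynomial of degree $d$.

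\textbf{Step 3: spanning path.} By the Chazelle--Welzl spanning path theorem, there is an ordering $v_1,\dots,v_n$ of $V$ with crossing number $O(n^{1-1/d})$. This means each neighbourhood $N(v)$ is crossed by only $O(n^{1-1/d})$ consecutive pairs $(v_i,v_{i+1})$, so $N(v)$ is a union of $O(n^{1-1/d})$ intervals of the ordering. The proof of this theorem is iterative reweighting driven by the Haussler packing lemma, and it needs exactly the polynomial shatter bound from Step 2.

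\textbf{Step 4: labels.} The label of $v_i$ consists of its index $i$ together with the sorted list of the endpoints of the intervals forming $N(v_i)$. This uses $O(n^{1-1/d}\log n)$ bits.

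To decide whether $u$ and $v$ are adjacent, test whether the index of $u$ lies in one of $v$'s intervals, by binary search or a linear scan. This is correct because $u\in N(v)$ exactly when the index of $u$ falls in one of those intervals, which gives the claimed $O(n^{1-1/d}\log n)$-bit scheme.
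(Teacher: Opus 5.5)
Steps~2--4 take the same route as the paper. The paper does not prove this theorem. However, in proving Theorem~\ref{thm:weaker} it reproduces the second half of Alon's argument: a polynomially bounded shatter function for the rows of the adjacency matrix, Welzl's ordering with $O(n^{1-1/d})$ switches per row (Lemma~\ref{lem:switches}), and labels consisting of the index plus the switch positions. There the shatter bound comes from Warren's theorem instead of the speed hypothesis. One small remark: the neighbourhood system of an undirected graph is self-dual, so your detour through the dual VC-dimension is unnecessary, though harmless.

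The real issue is Step~1, where the homogenisation of the witness set $B$ is unjustified as written. Once $A'$ is fixed, nothing constrains the adjacencies among witnesses, and they may depend on the traces on $A'$, which $H$ dictates. For example, suppose $w\sim w'$ whenever $|N(w)\cap A'|+|N(w')\cap A'|$ is odd. Then for rows $\emptyset,\{a_1\},\{a_1,a_2\}$ no choice from the pools makes $B$ homogeneous, and pigeonholing over types does not help. Making this work would require choosing $A'$ and the witnesses jointly, via a much stronger tree-type Ramsey argument.

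Fortunately neither Ramsey step is needed. Name the vertices of $A'$ as $1,\dots,m$ and $b_j$ as $m+j$. The cross pairs of the resulting labelled induced subgraph on $[2m]$ are exactly $H$. So distinct $H$ already give distinct labelled graphs in the family, whatever happens inside $A'$ or inside $B$.

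What you must handle instead, and did not, is that the witnesses are distinct and lie outside $A'$. A shattered set need not supply two witnesses with the same trace, and a witness may belong to $A'$. The fix has two parts:
\begin{enumerate}
\item Restrict to $H$ with pairwise distinct rows; there are at least $2^{m^2}(1-m^2 2^{-m})$ of them.
\item Start from a shattered set $X\cup Y$ with $|X|=|Y|=m$. For each $S\subseteq X$, the witnesses of the sets $S\cup T$, $T\subseteq Y$, are $2^m>2m$ distinct vertices, so one of them lies outside $X\cup Y$.
\end{enumerate}
Hence a single graph of VC-dimension at least $2m$ forces $s(2m)\ge 2^{m^2}(1-m^2 2^{-m})>2^{(1/4-\eps)(2m)^2}$ once $m\ge m_0(\eps)$; in fact $m_0(\eps)=O(\log(1/\eps))$ suffices. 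This bounds the VC-dimension by $2m_0(\eps)$ uniformly over all families with the given speed bound. That uniformity in $\eps$ is what the quantifier order of the theorem requires. You should phrase the contradiction quantitatively like this, rather than via ``arbitrarily large VC-dimension within the family''. No Alon--Balogh--Bollob\'as--Morris machinery is needed.
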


While this result does not directly address adjacency labelings of semialgebraic graphs, the proof actually yields a
similar bound for $d$-dimensional semialgebraic families. 
We state the following result, weaker than our Theorem~\ref{thm:main},
as a consequence of Alon's proof of Theorem~\ref{thm:sparseh}.

\begin{thm}
  \label{thm:weaker}
  $d$-dimensional semialgebraic families have an $O(n^{1-1/d}\log n)$-bit adjacency labeling scheme.
\end{thm}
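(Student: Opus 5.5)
We reconstruct Alon's argument directly for $d$-dimensional semialgebraic families, using that their shatter functions are polynomially bounded. The plan is to build, for every $n$-vertex graph $G$ in the family, a recursive labeling in which each vertex stores $O(n^{1-1/d})$ bits, up to a logarithmic factor.

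\textbf{Step 1: bounded dual shatter function.} Fix $G=(V,E)$ with $V\subset\reals^d$. For each $v$, let $N(v)$ be the neighborhood of $v$. Each set $N(v)\cap V$ is cut out from $V$ by the semialgebraic set $\{x:\Phi(f_i(x,v))\}$, which has constant description complexity. By Milnor--Thom/Warren-type bounds on the number of sign patterns of $t$ polynomials on $m$ points in $\reals^d$, the set system $\{N(v)\cap V'\}$, for any $V'\subseteq V$ with $|V'|=m$, has at most $O(m^d)$ distinct traces. So the primal shatter function of the neighborhood system is $O(m^d)$. Symmetrically, the parameters $v$ live in $\reals^d$, so the dual shatter function is also $O(m^d)$.

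\textbf{Step 2: a low-crossing spanning path (Welzl / Chazelle--Welzl).} With primal shatter function $O(m^d)$, there is an ordering $v_1,\dots,v_n$ of $V$ with the following property: every neighborhood set $N(u)$ "crosses" the path, i.e.\ separates consecutive vertices $v_i,v_{i+1}$, only $O(n^{1-1/d}\log n)$ times. (The clean version gives $O(n^{1-1/d})$ crossings; the iterative reweighting construction gives this up to a log, which suffices.) Equivalently, listing the vertices in this order, each row of the adjacency matrix, which is the indicator of $N(u)$, consists of $O(n^{1-1/d}\log n)$ constant runs.

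\textbf{Step 3: labels.} Give each vertex $u$ its position $\pi(u)\in[n]$ in the path, using $\lceil\log n\rceil$ bits. Then list the $O(n^{1-1/d}\log n)$ breakpoints of the row of $u$, each a position in $[n]$ costing $O(\log n)$ bits, together with the value of the first run. The decoder, given $\ell(u)$ and $\ell(v)$, reads $\pi(v)$ and binary-searches, or linearly scans, the breakpoint list of $u$ to find the value at position $\pi(v)$. This gives labels of $O(n^{1-1/d}\log^2 n)$ bits.

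\textbf{Step 4: removing a log factor.} To reach $O(n^{1-1/d}\log n)$, I would use the sharper crossing bound $O(n^{1-1/d})$, which holds (Matoušek; Chan's optimal partition trees) for shatter function $O(m^d)$ when $d\ge 2$. For $d=1$ the family is essentially interval-like, with a constant number of runs, and the bound is trivial. Alternatively, Alon's approach would do: recursively split off the vertices with low-crossing and charge each crossing only $O(\log n)$ bits.

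\textbf{Main obstacle.} The main obstacle is justifying the crossing-number bound $O(n^{1-1/d})$ from the shatter-function bound in the semialgebraic, non-hyperplane setting. For general set systems with primal shatter function $O(m^d)$, the spanning-path theorem of Chazelle and Welzl requires a bound on the dual shatter function as well, which Step 1 provides. I would invoke it in that form rather than reprove it.
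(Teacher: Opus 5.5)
Your proposal is correct and is essentially the paper's proof. The paper likewise bounds the shatter function of the adjacency rows by $O(m^d)$ via Warren's theorem (Lemma~\ref{lem:shatter}). It then invokes Welzl's low-switch column ordering (Lemma~\ref{lem:switches}), which directly gives $O(n^{1-1/d})$ switches, i.e.\ your Step~4 rather than the $\log$-lossy reweighting of Step~2. Finally, it labels each vertex by its index, its first entry, and its switch positions.

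The primal/dual distinction you worry about is moot here because the adjacency matrix is symmetric. Note, though, that your two justifications in Step~1 are swapped. The $O(m^d)$ trace count on $m$ points comes from Warren's bound in the $d$-dimensional parameter space of $v$, not from the points lying in $\reals^d$; the dual count is the one governed by the points' dimension.
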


\begin{proof}
Given a collection of binary vectors of length $n$, the \emph{shatter function} of this family is the function
$g(m)$, for $m\in [n]$, that gives the maximum number of distinct projections of the vectors on $m$ coordinates,
where the maximum is taken over all choices of $m$ coordinates.
Let us now consider the rows of the adjacency matrix of a $d$-dimensional semialgebraic graph.
We claim that these vectors have a polynomially-bounded shatter function.

\begin{lemma}
  \label{lem:shatter}
  The shatter function $g(m)$ of the rows of the adjacency matrix of a $d$-dimensional semialgebraic 
graph satisfies $g(m) = O(m^{d})$.
\end{lemma}
\begin{proof}
  Let $q_1,\ldots,q_m$ be the points that determine the $m$ coordinates (i.e., columns) of the adjacency
matrix. Then, for a point $p$ that determines some row, the signs of the elements of the $p$-row at
columns $q_1,\ldots,q_m$ are determined by the signs of the $mt$ $d$-variate polynomials (in $p$)
$f_i(p,q_j)$, for $i=1,\ldots,t$ and $j=1,\ldots,m$, at the point $p$. The so-called \emph{sign pattern} of 
this sequence of polynomials is upper bounded by the number of cells of their arrangement, which, 
by classical Milnor-Thom type bounds (such as Warren's theorem~\cite{MR226281}), is 
$O((mt)^d) = O(m^d)$, since $t$ is a constant and the polynomials are of constant degree.
\end{proof}

We now use the fact that if a family of vectors has a shatter function $g(m)=O(m^d)$, then they can 
be efficiently encoded in our sense. The following lemma, used by Alon~\cite{MR4800729}, was proved 
by Welzl~\cite{MR1213461} and reformulated in those terms by Alon, Moran, and Yehudayoff~\cite{MR3733359}.
For a binary vector $v$, we refer to a pair of consecutive indices $i$ and $i+1$ such that 
$v_i\not= v_{i+1}$ as a \emph{switch}.

\begin{lemma}
  \label{lem:switches}
  Consider a family of binary vectors of length $n$ with shatter function $g(m) \leq c\cdot m^d$ for
  each $m$, for some constant $c > 0$ and integer $d \geq 1$. Then there is a fixed permutation of the 
  coordinates of the vectors such that for each permuted vector $v$, the number of switches in $v$ 
  is at most $O(n^{1-1/d})$.
\end{lemma}

The adjacency labeling is obtained by first applying such a permutation to the columns of the adjacency matrix, and then
encode each row by the locations of the switches and the sign of the first element. Combining Lemmas~\ref{lem:shatter} and 
\ref{lem:switches} thus yields an upper bound of $O(n^{1-1/d}\log n)$ for $d$-dimensional semialgebraic 
families, namely, the sign of the first element, the sequence of switch locations, and the
column index of the vertex. This completes the proof of Theorem~\ref{thm:weaker}.
\end{proof}

In order to improve the $O(n^{1-1/d}\log n)$ bound of Theorem~\ref{thm:weaker} to $O(n^{1-2/(d+1)+\eps})$, 
as stated in our Theorem~\ref{thm:main}, we use polynomial partitioning.

\subsection{Polynomial partitioning}

Geometric partitioning techniques, allowing multidimensional divide-and-conquer algorithms, have been
fundamental building blocks in the computational geometry literature since the field’s pioneering era.
Seminal contributions by 
Chazelle and Welzl~\cite{MR1014739}, 
Matou\v{s}ek~\cite{MR1174360}, 
and Chazelle~\cite{MR1194032}, 
among others, have established the standard terminology of partition trees and cuttings.
Partition trees were further improved by Chan~\cite{MR2901245}. 
We refer to the survey of Agarwal~\cite{MR3726592} for a thorough account of this line of work through the
lens of range searching.

In their paper on the Erd\H{o}s distinct distances problem, Guth and Katz~\cite{MR3272924} proposed a novel
geometric partitioning scheme that later proved useful for many other problems.
Their main statement is the following~: For any finite set of points in $\reals^d$ and a parameter $D$, there
exists a polynomial $g\in\reals [x_1,\ldots ,x_d]$ of degree $O(D)$, such that each connected component of
$\reals^d\setminus Z(g)$, where $Z(g)$ is the zero set of $g$, contains at most a fraction $1/D^d$ of the
points. In a follow-up paper~\cite{MR3413420}, Guth proved that, given a finite set of low-degree $k$-dimensional
varieties in $\reals^d$ and a parameter $D$, there exists a polynomial of degree $O(D)$
such that each component of $\reals^d\setminus Z(g)$ intersects at most a fraction
$1/D^{d-k}$ of the varieties. These results were later applied successfully in many contexts, including
incidences bounds~\cite{MR3473668,MR3614779,MR2946447}, 
classical theorems in discrete geometry~\cite{MR2957631}, 
unit distances problems~\cite{MR2942731,MR3118901},
and regularity lemmas~\cite{MR4863535}, to name a few. 

Computational aspects of polynomial partitioning, including the delicate recursive handling of situations
in which many points of the input set lie on the zero set $Z(g)$ of the partitioning polynomial $g$ (for
which the results of \cite{MR3413420,MR3272924} provide no useful bound), have
been investigated and resolved by Agarwal, Aronov, Ezra, and Zahl~\cite{MR4243669}, and Matou\v{s}ek and
Pat\'{a}kov\'{a}~\cite{MR3351757}. This allowed the application of polynomial partitioning techniques to
range searching with semialgebraic sets~\cite{MR3123833}, 
3SUM-hard geometric problems~\cite{MR4488010}, 
and algebraic degeneracy testing~\cite{MR4921549}, 
among others.

Our main result heavily relies on polynomial partitioning, in the extended setups of
\cite{MR4243669,MR3351757}, and provides another perspective on the structure
of partition trees produced by the recursive application of the method.

\subsection{Plan of the paper}

Section~\ref{sec:graph} is dedicated to the proof of Theorem~\ref{thm:main}.
The statements in Theorems~\ref{thm:semilinear} and \ref{thm:polygon} are proved in 
sections~\ref{sec:semilinear} and \ref{sec:polygon}, respectively. 

\subsection*{Acknowledgements}

Work by Jean Cardinal was supported by the Fonds de la Recherche Scientifique-FNRS under Grant n°~T003325F. Work by Micha Sharir was partially supported by ISF grant~495/23.

The first author would like to thank Noga Alon for a discussion on his results in~\cite{MR4800729}, and Lena Yuditsky for sharing observations on semilinear families. The authors are also grateful to Pankaj Agarwal, Sariel Har-Peled, and the WG reviewers for insightful comments and references.

\section{Semialgebraic graphs} \label{sec:graph}

This section is dedicated to the proof of the following theorem from the introduction, restated here for convenience.

\semialgebraic*

We first observe that we can restrict our attention to the case where adjacency in the semialgebraic graph is given
by the sign of a single polynomial.
Indeed, since we assume that the number $t$ of polynomials and the size of the boolean predicate $\Phi$ are constant,
we can simply encode the vertex by concatenating the $t$ labels corresponding to the $t$ defining polynomials. Then,
for given labels $\lambda$, $\lambda'$, we use their sublabels corresponding to the $i$th polynomial to compute its
sign, repeat this for each $i$, and then apply the function $\Phi$ on these signs to determine the adjacency of the
two vertices (see for instance Chandoo~\cite{MR4603804} for related reasonings). 

We can also assume, at the cost of a logarithmic factor in the length of labels, that the input is a bipartite semialgebraic graph, in which adjacency is evaluated only between vertices in different parts.
We reduce the case of arbitrary graphs to the bipartite case using a simple divide-and-conquer scheme that covers the
edges of the input graph by bipartite graphs, so that every vertex is contained in $O(\log n)$ such bipartite graphs. 

From now on we will assume that the vertices of the considered semialgebraic graph are points in $\reals^d$.
Hence the setup is as follows: We are given a bipartite graph $G=(P\cup S,E)$ with $P,S\subset\reals^d$ and
$E\subseteq P\times S$, and a polynomial $f\in\reals[x_1,\ldots, x_d, y_1,\ldots, y_d]$ of constant degree
such that $(p,s) \in E$ if and only if $f(p, s) \geq 0$.

We first introduce notations for \emph{ranges} that can be thought of as the duals of the points in $P$ and $S$.
We can map every $s\in S$ to the range
\begin{equation} \label{eq:primal-range}
s^* := \{ x\in\reals^d \mid f(x, s) \geq 0 \},
\end{equation} 
so $(p,s)\in E$ if and only if $p\in s^*$.
Similarly, with each $p\in P$, we can associate a range
\begin{equation} \label{eq:dual-range}
 p^* := \{ y\in\reals^d \mid f(p, y) \geq 0 \},
\end{equation} 
and $(p,s)\in E$ if and only if $s\in p^*$.

We then define the dual sets of ranges $P^* := \{ p^* \mid p\in P\}$ and $S^* := \{ s^* \mid s\in S\}$.
The \emph{incidence graph} of a pair $(P, R)$, where $P$ is a set of points and $R$ a set of semialgebraic sets, is
the bipartite graph on $P\cup R$ where $(p,r)\in P\times R$ is an edge if and only if $p\in r$.
The graph $G$ can therefore be defined as the incidence graph of either the pair $(P, S^*)$ or the 
pair $(S, P^*)$, where both graphs are defined in $\reals^d$.

\subsection{Partition trees}

Given a pair $(P, S^*)$ consisting of a set $P$ of $n$ points and a set $S^*$ of $n$ semialgebraic
sets of constant complexity in $\reals^d$, we find a biclique decomposition of the corresponding 
bipartite incidence graph by constructing a \emph{partition tree} $T$ as follows.

Each node $v$ of $T$ is associated with a pair $(P_v, S^*_v)$, with $P_v\subseteq P$ 
and $S^*_v\subseteq S^*$. The root node corresponds to the input pair $(P, S^*)$.
The child nodes of a node $v$ each contains a subset of $P_v$, that together form a partition of $P_v$.
Let $w$ be such a child node, with an associated subset $P_w\subset P_v$.
Let us consider the set $K_w := \{s^*\in S^*_v \mid P_w\subset s^*\}$ of all ranges in $S^*_v$ 
that contain all points of $P_w$.
The pair $P_w\times K_w$ is a biclique in the incidence graph of $(P_v, S^*_v)$, which we include in our output.
Let us now consider the set $S^*_w := \{s^*\in S^*_v\mid s^*\text{ crosses }P_w\}$, where a range \emph{crosses} a
point set if it intersects it but does not contain it. If the number $|P_w|$ of points in $P_w$ is larger 
than some constant, a partition tree is constructed recursively on the pair $(P_w, S^*_w)$.
Otherwise, the remaining $O(|S_w^*|)$ incidences are encoded explicitly. It is clear that all incidences 
of the incidence graph of $(P, S^*)$ are thus encoded in this way in a one-to-one manner.

The same construction can naturally be defined in the dual setting defined by the pair $(S, P^*)$.
Furthermore, the two constructions can be intertwined: Every partitioning step can take place either 
in the primal or in the dual setting.

For our adjacency labeling purpose, we require that for each vertex of the incidence graph 
(i.e., point or range), the number of bicliques in which this vertex appears is bounded by 
an expression that gives, up to a logarithmic factor, the size of the encoding.
In the above partition tree, observe that a range appears in a node $v$ if it crosses the corresponding
point set $P_v$, but it participates in a biclique with the point set $P_w$ associated with a child node
$w$ of $v$ if it contains $P_w$.
On the other hand, the number of bicliques to which a point belongs is equal to the number of nodes
of the partition tree in which it appears.
In what follows, since the number of child nodes of a node will be bounded by a constant,
the number of bicliques in which a range participates will be proportional to the number of nodes 
in which it appears. We can therefore restrict our attention to the number of
nodes in which each point or range appears.

\subsection{Main construction}

We construct a partition tree based on the efficient polynomial partitioning techniques described by 
Matou\v{s}ek and Pat\'akov\'a~\cite{MR3351757}, and by Agarwal et al.~\cite{MR4243669}. 
Our technique is the same as that of Agarwal et al.~\cite[Appendix A]{AAEKS25}, but the analysis 
is different, and significantly simpler. In particular, we only focus on balancing the number of 
nodes in which points and ranges appear in the primal and dual phases, and do not need the full 
space-query time tradeoff for range searching described in \cite{AAEKS25}.
Another difference is that we restrict the analysis to semialgebraic graphs defined by a 
single polynomial, which allows us to avoid recursing on the number of conjunctions in the 
Boolean function $\Phi$ written in conjunctive normal form, as done in \cite{AAEKS25}.

The first relevant lemma pertains to the existence of a partitioning polynomial for a given 
set of points lying in a $k$-dimensional algebraic variety in $\reals^d$, for any $1\le k\le d$.
We denote by $Z(g)$ the zero set of a polynomial $g\in\reals[x_1,\ldots,x_d]$.

\begin{lemma}[Matou\v{s}ek and Pat\'akov\'a~\cite{MR3351757}]
\label{lem:mp}
Let $V$ be an algebraic variety of dimension $1\le k\le d$ in $\reals^d$ such that all of its irreducible components have
dimension $k$ as well, and the degree of every polynomial defining $V$ is at most $E$. 
Let $P \subset V$ be a set of $n$ points, and let $D\gg E$ be a parameter. 

There exists a polynomial $g\in\reals[x_1,\ldots,x_d]$ of degree at most $E^{d^{O(1)}}D$ such that 
\begin{enumerate}
\item $V\cap Z(g)$ has dimension at most $k-1$.
\item Each connected component of $V\setminus Z(g)$ contains at most $n/D^k$ points of $P$.
\end{enumerate}

Assuming $D, E, d$ are constants, the polynomial $g$, a semialgebraic representation of the connected components of
$V\setminus Z(g)$, the points of $P$ lying in each component, as well as the set of points of $P\cap Z(g)$, 
can be computed in $O(n)$ randomized expected time.
\end{lemma}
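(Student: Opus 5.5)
The plan is to follow the Guth--Katz iterated polynomial ham-sandwich argument, but to carry it out inside $V$. The extra ingredient is a space of polynomials that is large, of dimension $\Theta(m^k)$ in degree $m$, and whose nonzero elements never vanish identically on any component of $V$.

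\emph{Step 1: a good polynomial space.} Choose a generic linear projection $\pi:\reals^d\to\reals^k$. Since every irreducible component $V_i$ of $V$ has dimension $k$, a generic $\pi$ restricted to $V_i$ has $k$-dimensional image, so $\pi|_{V_i}$ is dominant onto $\reals^k$ in the Zariski sense. Hence for every nonzero $h\in\reals[y_1,\ldots,y_k]$ the pullback $h\circ\pi$ does not vanish identically on any $V_i$. It follows that $V\cap Z(h\circ\pi)$ has dimension at most $k-1$. The space $W_m$ of polynomials $h\circ\pi$ with $\deg h\le m$ has dimension $\binom{m+k}{k}=\Theta(m^k)$. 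If the real points of some component have smaller local dimension than its Zariski dimension, one has to perturb $\pi$ or pass to the part of $V$ where it is smooth and $k$-dimensional. I expect this to be the source of the $E^{d^{O(1)}}$ loss in the degree bound (via effective bounds on the number and degree of the components of $V$).

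\emph{Step 2: iterated ham sandwich.} We proceed in rounds $j=1,\ldots,r$ with $2^r\approx D^k$. At the start of round $j$ we have $2^{j-1}$ finite point sets: the points of $P$ grouped by their sign pattern with respect to $g_1,\ldots,g_{j-1}$, with points on $Z(g_1\cdots g_{j-1})$ discarded. The polynomial ham-sandwich theorem (Stone--Tukey, in its discrete form) applied in $W_{m_j}$ gives a nonzero $g_j=h_j\circ\pi$ that simultaneously bisects all these sets, provided $\dim W_{m_j}>2^{j-1}$. This allows $m_j=O(2^{j/k})$. Set $g=\prod_j g_j$. Then $\deg g=\sum_j O(2^{j/k})=O(2^{r/k})=O(D)$, a geometric sum.

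Property 1 holds because each factor is nonzero on every $V_i$. For property 2, observe that every connected component of $V\setminus Z(g)$ has a constant sign vector $(\mathrm{sign}\,g_j)_j$. The points of $P$ with a fixed sign vector number at most $n/2^r\le n/D^k$, because every round at least halves every class.

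\emph{Step 3: algorithm.} With $D,E,d$ constant, all objects have constant description size. Exact discrete ham-sandwich cuts in constant dimension are too slow to compute naively. Instead, I would compute each $g_j$ on a random sample of constant size (an $\eps$-approximation for the range space of sign classes of polynomials in $W_{m_j}$, which has constant VC-dimension). This yields cells with at most $c\,n/D^k$ points in expectation, and $D$ is rescaled by a constant to absorb $c$; one resamples if the check fails. The components of $V\setminus Z(g)$ are obtained in $O(1)$ time by a cylindrical algebraic decomposition. Each point is then located in $O(1)$ time (or found to lie on $Z(g)$), for $O(n)$ expected time overall.

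\emph{Main obstacle.} The delicate step is Step 1: guaranteeing $\dim(V\cap Z(g))\le k-1$ for real varieties and controlling the degree. For this I would rely on real algebraic geometry tools: Noether normalization, generic projections, and effective bounds on irreducible components. The counting in Step 2 is routine.
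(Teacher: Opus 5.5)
Your proof is correct. The paper gives no proof of this lemma; it is quoted from Matou\v{s}ek and Pat\'akov\'a, whose machinery is also relied on for the algorithmic part. So there is no in-paper argument to compare against. Your generic-projection argument is a valid self-contained proof of the statement, and in fact of a slightly stronger one.

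\textbf{What the construction gives.} Step~2 is just the Guth--Katz partition of $\pi(P)\subset\reals^k$; take $\pi$ injective on $P$, which a generic $\pi$ is. If $g=h\circ\pi$ and $C$ is a connected component of $V\setminus Z(g)$, then $\pi(C)$ is connected and avoids $Z(h)$. So $\pi(C)$ lies in a single cell of $\reals^k\setminus Z(h)$, which gives property~2 at once. Hence your degree bound is $O_k(D)$, independent of $E$. The degree $E$ of $V$ enters only through the Milnor--Thom count of cells (Lemma~\ref{lem:mt}) and the constants of the algorithm.

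\textbf{Step 1 is not an obstacle.} The caveat at the end of Step~1 and your guess about the origin of the $E^{d^{O(1)}}$ factor are misplaced. The hypothesis concerns real irreducible components of real dimension $k$. Each such $V_i$ contains a $k$-dimensional smooth patch, and a generic $\pi$ is a local diffeomorphism on it. Therefore $\pi(V_i)$ has nonempty interior, no nonzero $h$ vanishes on it, and no perturbation or passage to a smooth part is needed. Property~1 then follows from the standard fact that a proper algebraic subset of an irreducible real algebraic set has strictly smaller dimension.

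\textbf{Minor points in Step 3.}
\begin{itemize}
\item The sample-based bisecting polynomials can be found exactly in $O(1)$ time. Enumerate the cells of the arrangement of the hyperplanes $\{h : h(\pi(s))=0\}$, for $s$ in the sample, in the coefficient space of $\reals[y_1,\ldots,y_k]_{\le m_j}$.
\item The $\eps$-approximation yields the $c\,n/D^k$ bound with constant probability, not ``in expectation''. The $O(n)$ check plus resampling still gives $O(n)$ expected time.
\item You should say how $\pi$ is drawn, for example with random integer entries from a large range. Property~1 can then be verified in $O(1)$ time, since everything has constant description complexity.
\end{itemize}
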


The second lemma generalizes the above result to semialgebraic sets.

\begin{lemma}[Agarwal~\etal~\cite{MR4243669}]
\label{lem:aaez}
Let $V$ be an algebraic variety of dimension $1\le k\le d$ in $\reals^d$, defined by polynomials of degree at most $E$. 
Let $P^*$ be a set of $n$ semialgebraic sets in $\reals^d$, each of complexity at most some constant $b$, 
and let $D\gg E$ be a parameter.

There exists a polynomial $g\in\reals[x_1,\ldots,x_d]$ of degree at most $E^{d^{O(1)}} D$ such that
\begin{enumerate}
\item $V\cap Z(g)$ has dimension at most $k-1$.
\item Each connected component of $V\setminus Z(g)$ is crossed by (i.e., intersected by but
  not contained in) at most $n/D$ sets of $P^*$.
\end{enumerate}

Assuming $D$, $E$, $d$, and $b$ are constants, the polynomial $g$, a semialgebraic representation of the connected
components of $V\setminus Z(g)$, and the elements of $P^*$ crossing each component can be computed in $O(n)$
randomized expected time.
\end{lemma}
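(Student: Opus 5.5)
The plan is to reduce crossings to intersections with boundaries, and then run Guth's partitioning argument for varieties, restricted to $V$. A set $\sigma\in P^*$ of complexity $b$ has its boundary $\partial\sigma$ contained in the union of $O(b)$ hypersurfaces $Z(h)$ with $\deg h\le b$. A connected component $C$ of $V\setminus Z(g)$ that $\sigma$ crosses contains points both in $\sigma$ and outside $\sigma$. Since $C$ is connected, $C$ must meet $\partial\sigma$, and hence meet $V\cap Z(h)$ for one of these $h$. We may assume, after a generic perturbation or by discarding the degenerate cases, that each $V\cap Z(h)$ has dimension at most $k-1$. If $V\subseteq Z(h)$, that $h$ cannot separate points of $V$ and may be dropped. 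It therefore suffices to find $g$ such that each component of $V\setminus Z(g)$ meets at most $n/D$ of the $O(bn)$ subvarieties $W_\sigma=V\cap\partial\sigma$. Each $W_\sigma$ has dimension at most $k-1$ and constant degree.

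For the existence part, I would follow Guth~\cite{MR3413420}, working inside $V$ rather than $\reals^d$. We build $g=g_1g_2\cdots g_J$ in $J\approx k\log D$ rounds. In round $j$ there are $2^{j-1}$ current cells, and we apply the polynomial ham-sandwich theorem to bisect all of them simultaneously with a polynomial $g_j$ of degree $O(2^{j/k})$. The $W_\sigma$ are measure-zero sets, so we bisect a thickened version of each: the $\delta$-neighbourhood of $W_\sigma$ in $V$, with $\delta$ tending to zero at the end. The degree of the measure used in the bisection must be taken with respect to the $k$-dimensional measure on $V$. To make the final product have degree at most $E^{d^{O(1)}}D$, we first choose a generic projection $V\to\reals^k$ that is finite-to-one. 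We then do the ham-sandwich step in $\reals^k$, where the bisecting polynomials in $k$ variables have the usual degree bounds. Finally we pull back. The projection has degree at most $E^{O(d)}$ fibres, which accounts for the $E^{d^{O(1)}}$ factor. Condition~1, that $\dim(V\cap Z(g))\le k-1$, is then automatic, since no pulled-back factor vanishes identically on a $k$-dimensional component.

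The counting argument is the standard one. The partition has about $D^k$ cells of $V\setminus Z(g)$, with the number of cells controlled by Milnor--Thom/Barone--Basu bounds on $V$. By the Barone--Basu bound for a $(k-1)$-dimensional variety, each $W_\sigma$ meets only $O(D^{k-1})$ of them, with the constant depending on $E$ and $b$. The ham-sandwich equalization forces the incidences, about $n\cdot D^{k-1}$ in total, to be spread evenly over the cells. Hence each cell meets $O(n/D)$ of the $W_\sigma$. Rescaling $D$ by a constant gives the clean bound $n/D$.

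The main obstacle is making this equalization rigorous. The ham-sandwich step balances measures rather than counts. One must show that a cell which meets many $W_\sigma$ receives a proportional share of their thickened measure, uniformly as $\delta\to0$, on a possibly singular $V$. This is exactly where Guth's "polynomial partitioning for varieties" needs its careful limiting argument, and I would import it essentially verbatim after the projection to $\reals^k$.

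For the algorithmic statement, I would instead use random sampling, as Agarwal et al. do. Take a random sample $R\subset P^*$ of constant size $r=D^{O(1)}$. Compute a constant-complexity decomposition of $V$ adapted to the sampled boundaries, for instance a vertical decomposition or a cylindrical algebraic decomposition of the arrangement of $\{W_\sigma:\sigma\in R\}$ on $V$. By the $\eps$-net theorem, each of its cells is crossed by at most $O((n/r)\log r)$ sets. Then choose $g$ by running the above existence argument on $R$ only, which takes constant time. Finally verify in $O(n)$ time whether each component is crossed by at most $n/D$ sets. If the check fails, resample; the expected number of trials is constant.
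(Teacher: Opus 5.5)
The paper does not prove this lemma. It imports it from Agarwal, Aronov, Ezra, and Zahl, whose existence statement rests on Guth's partitioning theorem and whose algorithm combines $\eps$-samples with quantifier elimination.

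\textbf{Existence part.} Your existence route works if you use Guth's theorem as a black box. Replace $\sigma$ by a $(k-1)$-dimensional set $W_\sigma$. Take a generic linear projection $\pi$, finite on $V$, to $\reals^k$. Apply Guth's theorem in $\reals^k$ to the Zariski closures of the sets $\pi(W_\sigma)$; these have dimension at most $k-1$ and degree bounded in terms of $d,E,b$. Then pull back. Every connected component of $V\setminus Z(\tilde g\circ\pi)$ maps into a single cell of $\reals^k\setminus Z(\tilde g)$, so it meets at most as many $W_\sigma$ as that cell meets projected sets.

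However, the mechanism you offer for the key step is wrong. Bisecting thickened measures equalizes mass, not incidences. A $W_\sigma$ can meet a cell while leaving an arbitrarily small fraction of its measure there. So ``a cell meeting many $W_\sigma$ receives a proportional share of their measure'' is false, and the Barone--Basu count plus equipartition gives only an average, not a per-cell bound.

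Two smaller points on the reduction:
\begin{itemize}
\item You may not perturb the given sets.
\item When $V$ is reducible, as the special cells in the paper's recursion can be, dropping $h$ only when $V\subseteq Z(h)$ is not enough. Use $\bigcup\{V_i\cap Z(h): V_i\not\subseteq Z(h)\}$ over the irreducible components $V_i$ of $V$. This still contains the relative boundary of $\sigma\cap V$ in $V$ and has dimension at most $k-1$.
\end{itemize}

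\textbf{Algorithmic part: the genuine gap.} The vertical decomposition and the $\eps$-net bound play no role. The cells of $V\setminus Z(g)$ are not cells of that decomposition. An $\eps$-net only guarantees that ranges missed by the sample are light. A polynomial that is good for $R$ still has up to about $|R|/D$ sampled sets crossing each cell, so nothing in your argument shows the final check succeeds with constant probability.

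What you need is that $R$ be an $\eps$-approximation, with $\eps=1/(2D)$, for the following range space on $P^*$:
\begin{itemize}
\item the ranges are $\{\sigma:\sigma\text{ crosses }C\}$;
\item $C$ ranges over the components of $V\setminus Z(g)$ for \emph{all} $g$ of degree up to the target bound.
\end{itemize}
Uniformity over all such $g$ is essential, because $g$ is chosen after $R$. Since $D$ is constant, these components form a semialgebraic family of bounded description complexity. Hence the range space has bounded VC-dimension, and a random sample of size $D^{O(1)}$ is such an approximation with constant probability. Then any $g$ whose cells are each crossed by at most $|R|/(2D)$ members of $R$ has cells crossed by at most $n/D$ members of $P^*$.

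Also, ``running the existence argument on $R$'' is not an algorithm: Guth's proof is topological (Borsuk--Ulam followed by a limit $\delta\to0$). Instead, compute $g$ for the constant-size $R$ by quantifier elimination over its coefficients. Apply it to the first-order formula stating that every cell of $V\setminus Z(g)$ is crossed by at most $|R|/(2D)$ members of $R$. The existence part guarantees a solution, and a witness is found in $O(1)$ time.
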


In what follows, given some partitioning polynomial $g$, we will refer to the connected components of 
$V\setminus Z(g)$ as the \emph{cells} of the partition, and to the set $V\cap Z(g)$ as the \emph{special cell}.
(The standard cells are all connected, but the special cell does not have to be connected.)
The two lemmas above respectively give bounds on the number of points contained in each cell and the 
number of semialgebraic sets crossing each cell (except for the special cell). 
A consequence of the well-known Milnor-Thom type series 
of results~\cite{MR1401711,MR161339,MR200942} is that the number of connected components of 
$V\setminus Z(g)$, and hence the number of cells, is at most $(c D)^k$ for some constant $c$ 
depending on $d$ and $E$. As another consequence of the Milnor-Thom type bounds, we can also 
give a bound on the number of cells crossed by any semialgebraic set. To state these bounds, we
reuse the notations of Lemmas~\ref{lem:mp} and \ref{lem:aaez}. That is,

\begin{lemma}
  \label{lem:mt}
  Let $V$ be an algebraic variety of dimension $1\le k\le d$ in $\reals^d$, defined by polynomials of degree at most $E$,
and $g\in\reals[x_1,\ldots,x_d]$ a polynomial of degree at most $E^{d^{O(1)}} D$, for some parameter $D\gg E$.
  Then the number of connected components of $V\setminus Z(g)$ crossed by a semialgebraic set of complexity at most
$b$ is at most $(cD)^{k-1}$, where the constant $c$ depends on $d, E$, and $b$. 
\end{lemma}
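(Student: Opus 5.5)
The plan is to reduce the count of crossed cells to a Milnor--Thom type bound on the number of connected components of a real variety. Let $\sigma$ be a semialgebraic set of complexity at most $b$, defined by a Boolean combination of at most $b$ polynomials $h_1,\ldots,h_b$ of degree at most $b$. If $\sigma$ crosses a cell $C$ of $V\setminus Z(g)$, then $C$ contains points inside and points outside $\sigma$. Since $C$ is connected, $C$ must meet the boundary of $\sigma$. That boundary is contained in $\bigcup_i Z(h_i)$. So every crossed cell meets $V\cap Z(h_i)$ for some $i$, minus $Z(g)$. It therefore suffices to bound, for each fixed $i$, the number of cells of $V\setminus Z(g)$ that meet $W_i := V\cap Z(h_i)$, and then multiply by $b$.

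First I would reduce to the case where $h_i$ does not vanish identically on any irreducible component of $V$. If it did, those components would lie entirely in $Z(h_i)$. On such a component, $\sigma$ is determined by the other sign conditions. Iterating over the at most $b$ polynomials, and possibly using a small perturbation, handles this situation. In the case where $h_i$ vanishes identically on a component, one can instead argue directly about the sign patterns of the remaining $h_j$ on it. Once $h_i$ is in general position, $W_i$ has dimension at most $k-1$ and is defined by polynomials of degree $O(E+b)$.

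Next, each cell $C$ meeting $W_i$ contains at least one connected component of $W_i\setminus Z(g)$. Distinct cells are disjoint, so they contain distinct such components. The number of crossed cells is therefore at most the number of connected components of $W_i\setminus Z(g)$. To bound this number I would invoke the refined Milnor--Thom bound for varieties of dimension $k'$, due to Barone and Basu or Solymosi and Tao (the results cited as \cite{MR1401711} and related works). That bound says the number of connected components of $W\setminus Z(g)$, for a variety $W$ of dimension $k'$ defined by polynomials of constant degree and $g$ of degree $\Delta$, is $O(\Delta^{k'})$, with the constant depending on $d$ and the defining degrees. Applying it with $k'\le k-1$ and $\Delta = E^{d^{O(1)}}D$ gives at most $(cD)^{k-1}$ components, since $E$ and $b$ are constants. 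Summing over the at most $b$ polynomials only changes $c$.

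The main obstacle is the dimension-sensitive component bound itself. The naive Milnor--Thom bound on $\reals^d$ gives only $D^d$; the point is that the bound must depend on the dimension of $W_i$, and the exponent must be exactly $k-1$ with the dependence on $D$ linear per dimension. I would cite Barone--Basu for this bound rather than reprove it. The secondary nuisance is the degenerate case where $h_i$ vanishes on a component of $V$. I would handle it by replacing $V$ with the union of components on which $h_i$ is not identically zero, since on the remaining components the membership in $\sigma$ is governed by fewer polynomials, and then inducting on $b$.
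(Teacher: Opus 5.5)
Your argument is correct and is the Milnor--Thom reasoning the paper invokes without writing out a proof. A crossed cell must meet an at most $(k-1)$-dimensional part of $V\cap\bigcup_i Z(h_i)$, and the dimension-sensitive Barone--Basu bound on the components of that part minus $Z(g)$ then gives $O(D^{k-1})$.

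Your component-wise induction on $b$ does handle the degenerate case; drop the perturbation idea. A quicker route is to take a point of $C$ where membership in $\sigma$ is not locally constant on $C$. Such a point is forced into $V_i'\cap Z(h_i)$, where $V_i'$ is the union of the components of $V$ on which $h_i\not\equiv 0$.

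One caveat: absorbing the multiplicative constant into $c$ needs $k\ge 2$. For $k=1$ the true bound is $O(1)$, not $(cD)^0=1$: a set meeting a line in two short intervals lying in different cells crosses two cells. This is harmless, because the paper applies the lemma only with $k>1$.
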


\begin{proof}[Proof of Theorem~\ref{thm:main}]
  We proceed in two phases, corresponding to the primal and dual settings.
In the first phase, we construct a partition tree in the primal setting $(S, P^*)$ using Lemma~\ref{lem:aaez}.
We partition recursively until we reach leaves that contain at most $n^{d/(d+1)}$ ranges from $P^*$.
In the second phase, we construct a partition tree for each such leaf in the dual setting $(P,S^*)$, 
using Lemma~\ref{lem:mp}, continuing all the way up to leaves with a constant number of points from $P$.
A schematic description of the partition tree thus obtained is given in Figure~\ref{fig:schema}.

\begin{figure}
  \begin{center}
    \includegraphics[page=3]{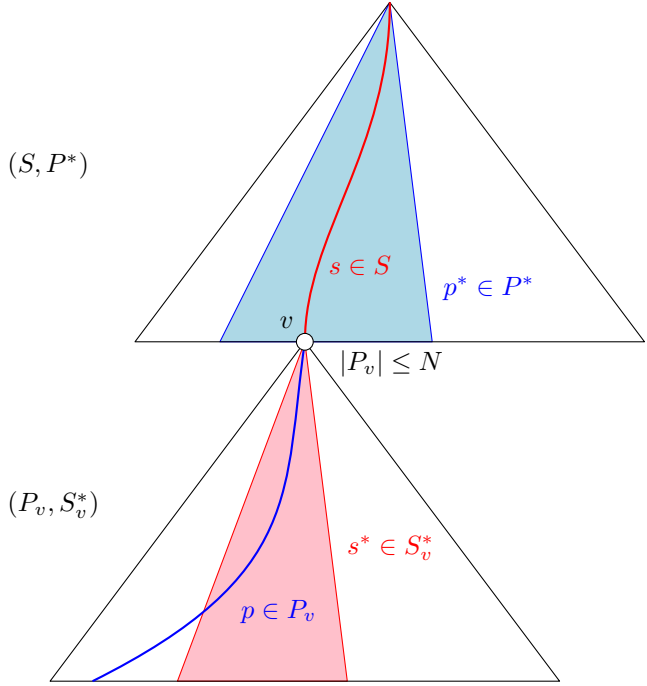}
  \end{center}
  \caption{\label{fig:schema}{\small{\sf A schematic description of the partition tree used in the proof of
Theorem~\ref{thm:main}. The top tree corresponds to the first phase, in which we repeatedly apply
Lemma~\ref{lem:aaez}, and at the end of which we are left with leaves each containing at most 
$N = n^{d/(d+1)}$ elements of $P$. A dual partition tree is constructed from each leaf $v$ 
by repeatedly applying Lemma~\ref{lem:mp}. The blue and red subtrees depict the nodes in which, 
respectively, some element of $P$ (that is, of $P^*$) and some element of $S$ appear.} }}
\end{figure}

In more details, we first consider the pair $(S, P^*)$, where $P^*$ is the set of ranges dual to the points of $P$.
We construct a partition tree by repeatedly applying Lemma~\ref{lem:aaez} with suitably chosen degrees $D$.
Initially, we set $V$ to be the whole space $\reals^d$, and put $k=d$. We recurse on the cells defined 
by the partitioning polynomial $g$, as well as on the set of points contained in the special cell 
$V\cap Z(g)$, and a range is retained in the recursive call only if it crosses the cell. In the case of 
the special cell $V\cap Z(g)$, we do not have any guarantee on the number of points, but 
Lemma~\ref{lem:aaez} ensures that the dimension $k$ of the ambient variety $V$ decreases.
We stop our iterations when the number of semialgebraic sets from $P^*$ remaining in a node is at most $N = n^{d/(d+1)}$.

From Lemma~\ref{lem:mt}, we have that at every cell of this first phase of the partition, 
the number of child subcells intersected by a set
$p^*\in P^*$ is at most $(cD)^{k-1}$, for some constant $c$ (that depends on the much smaller
degrees used earlier in the recursion, for higher-degree varieties), which we assume to be $\ge 1$,
where $D$ is the degree parameter at hand.
We can therefore give an upper bound on the number $\nu (n, k)$ of nodes of the partition tree in which a given range
$p^*\in P^*$ appears, where $n$ is the size of $P^*$, and $k$ is the degree of the ambient variety.
For $k>1$, combining Lemma~\ref{lem:aaez} with Lemma~\ref{lem:mt}, we have
\begin{equation}
  \label{eq:nurec}
  \nu (n, k) \leq (c_kD_k)^{k-1} \nu ( n/D_k, k ) + \nu (n, k-1),
\end{equation}
where the values $D_k$ are the degrees of the partitioning polynomials and each coefficient $c_k$
depends on the earlier degrees, $D_d,\ldots,D_{k+1}$. The degrees $D_k$ are chosen such that $D_{k-1}\gg D_k$.
For this choice to make the subsequent analysis work, we require that $D_k^\eps \ge 2c_k^{k-1}$.
We also have two base cases. First, we have $\nu (N, k) = 1$, for each $k$, since, by definition, 
we stop recursing when at most $N$ sets are left. Then we observe that if $k=1$, we can use balanced 
binary search trees to get $\nu (n, 1) = O(\log n)$. 

The following lemma handles the analysis of the first part of the structure.
We stress that in the lemma, the parameter $N$ is fixed, equal to $n^{d/(d+1)}$ for 
the initial value of $n$, whereas $n$ varies with the recursion.

\begin{lemma} \label{lem:part1}
For each $1\le k\le d$ and each $n\ge N$, we have $\nu (n, k) \leq A_k \left(\frac n N\right)^{k-1+\eps}$, where $\eps = \max_{1\le k\le d} \log_{D_{k}} (2c_k^{k-1})$,
and $A_k$ depends on $\eps$ and $k$.
\end{lemma}
\begin{proof}
Given $\eps$, this dictates the choice of the degrees $D_k$, to ensure that each $D_k$ satisfies
$D_k \ge (2c_k^{k-1})^{1/\eps}$. Once we ensure that this holds for $k=d$, the choice of the other
constants $c_k$ can be enforced recursively, as part of the requirements $D_k\gg D_{k+1}$.
Since the degrees $D_k$ grow rapidly as $k$ decreases to $1$, we need to have $D_1\le n$, 
or else there is no partition for $k=1$. So $n$ has to be at least some (rather large) 
constant value $n_0 = n_0(\eps)$.  When $n \le n_0$ we simply use the trivial bound $\nu(n,k) = n$.

The bound in the lemma clearly holds for the two base cases, if the coefficients $A_k$ are 
chosen sufficiently large. Another base case is $n\le n_0(\eps)$. However, in this part of
the structure we always have $n\ge N$, so this case arises only when $N = n^{d/(d+1)}$
(for the initial value of $n$) is smaller than $n_0(\eps)$. When this is the case, we do 
not recurse at all, and just use the aforementioned bound $\nu(n,k) = n$. 
Since the right-hand side of the bound is at least $A_k$, it suffices
to ensure that $n_0(\eps) \le A_k$, which can be enforced by choosing $A_k$
sufficiently large.

We thus continue the analysis assuming that $n > n_0(\eps)$. We will further assume that 
$A_k \geq 2A_{k-1}$, and proceed by induction on $n$ and $k$, with $n > N$.
The recurrence, combined with the induction hypothesis, gives us: 
  \begin{eqnarray*}
    \nu (n, k) & \leq & (c_kD_k)^{k-1} \nu ( n/D_k, k ) + \nu (n, k-1 ) \\
    & \leq & (c_kD_k)^{k-1} A_k \left(\frac {n} {N D_k}\right)^{k-1+\eps} + A_{k-1} \left( \frac{n}{N}\right)^{k-2+\eps} \\
    & = & A_k \left(\frac n N \right)^{k-1+\eps} \left( \frac{c_k^{k-1}}{D_k^{\eps}} + \frac {A_{k-1}N}{A_k n} \right) \\
    & \leq & A_k \left( \frac n N \right)^{k-1+\eps} \left( \frac 12 + \frac 12 \right) 
    \ \ \text{(since $D_k \ge (2c_k^{k-1})^{1/\eps}$, $n \ge N$, and $A_k\ge 2A_{k-1}$)} \\
    & = & A_k\left(\frac n N\right)^{k-1+\eps} . 
  \end{eqnarray*}
\end{proof}

Letting $k=d$ and taking $n$ to be its original value, this solves to:
\[
\nu (n, d) \leq A_d \left(\frac n N\right)^{d-1+\eps}
= O( n^{1-\frac 2{d+1} + \eps}) .
\]
We also observe that every point in $S$ shows up in exactly one subcell at every recursive level, 
hence the number of nodes of the partition tree containing a given point $s\in S$ is $O(\log n)$.

This describes the first half of our decomposition. We then switch roles and consider the pair $(P, S^*)$, 
where now $S^*$ is a set of ranges, and $P$ is the original set of points.
Each leaf $v$ of the first partition tree corresponds to a set $P_v$ of at most 
$N = n^{d/(d+1)}$ points of $P$, and some (possibly large) set $S^*_v$ of ranges from $S^*$.
That is, the first part of the structure guarantees a uniform bound on $|P_v|$ at each of its leaves $v$,
but there is no control on the size of the corresponding set $S_v$, which could be as large as $n$.
All we know is that $\sum_v |S_v|$, over the leaves $v$, is $n$ (because the sets $S_v$ are pairwise disjoint).

This calls for another approach, where now we construct another partition tree on each leaf $v$
of the first structure, but now the partition is dictated by the point set $P_v$ and not by
the set of ranges, as we did in the first part. Concretely, we repeatedly apply Lemma~\ref{lem:mp} 
to the set $P_v$ at a node $v$, with suitable constants $D_k$, recursing on both the 
connected components of $V\setminus Z(g)$ and on the special cell $V\cap Z(g)$, as before.
We end the process when the leaves contain only $O(1)$ points of $P$.
Again, we consider the number, now denoted as $\nu^* (n,k)$, of nodes of the partition tree 
in which a given range $s^*\in S^*_v$ appears, where $n$ is the number of elements in $P_v$.
Combining Lemmas~\ref{lem:mp} and~\ref{lem:mt}, we obtain the following recursion for $k>1$,
which bears some resemblance to \eqref{eq:nurec}: 
\[
  \nu^* (n, k) \leq (c_kD_k)^{k-1} \nu^* ( n/D^k_k, k ) + \nu^* (n, k-1),
\]
where the values $D_k$ are the degrees of the partitioning polynomials, chosen such that $D_{k-1}\gg D_k$, 
as before, and each value $c_k$ depends, as before, on the preceding degrees $D_j$, for $j > k$. 
We also have the base cases $\nu^* (1, k) = 1$ and $\nu^* (n, 1) = O(\log n)$. 

The analog of Lemma~\ref{lem:part1} in the new context is:
\begin{lemma} \label{lem:part2}
For each $1\le k\le d$ and each $n$, we have $\nu^* (n, k) \leq B_k n^{1-1/k+\eps}$,
where $\eps = \max_{1\le k\le d} \left( \frac{1}{k} \log_{D_{k}} \left( 2^{1/k}c_k^{1-1/k} \right) \right)$, and $B_k$ depends on $\eps$ and $k$.
\end{lemma}
\begin{proof}
We also require, as above, that $n > n_0(\eps)$, with the threshold $n_0(\eps)$ defined similarly,
and use the trivial bound $\nu^*(n,k) = n$ when $n\le n_0(\eps)$.
As before, the bound holds for the base cases, if the $B_k$ are chosen sufficiently large.
For the general case, using induction, we obtain
\begin{eqnarray*}
  \nu^*(n,k) & \leq & (c_kD_k)^{k-1} \nu^* ( n/D^k_k, k ) + \nu^* (n, k-1) \\
             & \le  & (c_kD_k)^{k-1} B_k \left(\frac n {D_k^k}\right)^{1-1/k+\eps} + B_{k-1} n^{1-1/(k-1)+\eps} \\
             & =    & B_k n^{1-1/k+\eps} \left( \frac{ c_k^{k-1} D_k^{k-1}}{D_k^{k-1+k\eps}} + 
\frac{ B_{k-1} n^{1-1/(k-1)+\eps} }{ B_{k} n^{1-1/k+\eps} } \right) \\
& = & B_k n^{1-1/k+\eps} \left( \frac{ c_k^{k-1} }{D_k^{k\eps}} + 
\frac{ B_{k-1} }{ B_{k} n^{1/(k(k-1))} } \right) \\
& \le & B_k n^{1-1/k+\eps} ,
\end{eqnarray*}
where the last inequality holds since
\[
D_k^{k\eps} \ge (2^{1/k}c_k^{1-1/k})^k = 2 c_k^{k-1} ,
\]
and the second term in the parenthesis will be smaller than $1/2$ if $B_k$ is chosen larger than $2B_{k-1}$.
This completes the induction step and establishes the lemma.
\end{proof}

Letting $k=d$ and replacing $n$ by $N=n^{d/(d+1)}$, which is the initial value of $|P_v|$ 
for the second part of the structure, we obtain that every range $s^*\in S^*_v$ appears in at most 
$O( n^{1-\frac 2{d+1} + \frac{d\eps}{d+1}})$ nodes.

We may now account for the contributions of the two phases together.
In the partition tree of the first phase, every point of $S$ appears in $O(\log n)$ nodes 
along a path, and every range of $P^*$ appears in $O(n^{1-\frac 2{d+1} + \eps})$ nodes, by
Lemma~\ref{lem:part1}.
Then at every leaf $v$ of this partition tree, we attach a dual partition tree on the set 
of points $P_v$ and the set of ranges $S^*_v$. In that tree, every point of $P_v$ appears 
in at most $O(\log n)$ nodes along a path, and every range of $S^*_v$ appears in at most 
$O( n^{1-\frac 2{d+1} + \frac{d\eps}{d+1}})$ nodes, by Lemma~\ref{lem:part2}.
(We refer to Figure~\ref{fig:schema} for a schematic description.)
So overall, every element of $S$ appears in $O(\log n + n^{1-\frac 2{d+1} + \eps}) = O(n^{1-\frac 2{d+1} + \eps})$ 
nodes, and every element of $P$ appears in 
$O(n^{1-\frac 2{d+1} + \frac{d\eps}{d+1}} \log n) = O(n^{1-\frac 2{d+1} + \eps})$ 
nodes, with a suitable adjustment of the coefficients, as desired.
This concludes the proof of Theorem~\ref{thm:main}.
\end{proof}

\section{Semilinear graphs}
\label{sec:semilinear}

We now prove our result on semilinear graphs, which we restate:

\semilinear*

Let us first observe that, as briefly mentioned in the introduction,
many well-known families of graphs are semilinear families:
\medskip

\begin{itemize}
\item \emph{Interval graphs} are semilinear, since every interval can be described by two endpoints, 
and testing adjacency involves only comparisons between the endpoints, after an initial sorting of all 
the endpoints. Similarly, \emph{circular arc graphs} are intersection graphs of arcs of a circle, 
and also form a semilinear family.

\item \emph{Permutation graphs} are intersection graphs of line segments whose endpoints lie on two 
parallel lines. Clearly, the intersection pattern can also be described by a semilinear predicate.

\item \emph{Circle graphs} are intersection graphs of chords of a circle (they are a variant of circular 
arc graphs). Again, their intersection can be determined by the relative order of the endpoints around the circle.

\item \emph{Distance-hereditary graphs} are graphs such that graph distances between vertices are preserved 
under taking induced subgraphs. They are known to be circle graphs, hence also form a semilinear family.

\item \emph{Cographs} are graphs that do not have a path on four vertices as an induced subgraph. 
They are known to be permutation graphs, hence are also semilinear.
\end{itemize}
\medskip

We refer to the survey of Brandst\"{a}dt, Le, and Spinrad~\cite{MR1686154} for background on these families.
More generally, intersection graphs of geometric shapes formed by axis-aligned boxes or segments, 
such as \emph{grid intersection graphs}~\cite{MR3817844}, and \emph{max point-tolerance graphs}~\cite{MR3575015}, 
and constant-complexity boolean combinations thereof, form semilinear families.

We first consider two special semilinear families, comparability graphs in dimension $d$, and graphs of boxicity at most $d$, and observe that they have an $O(d\log n)$-bit adjacency labeling scheme.
We then proceed to show that semilinear graphs can be decomposed into constantly many comparability graphs.

\subsection{Comparability graphs and bounded-boxicity graphs}

We define \emph{$d$-dimensional bipartite comparability graphs} as bipartite graphs of the form 
$(S\cup P, E)$, where $S$ and $P$ are set of points in $\reals^d$, and such that $(p,s)\in E$ 
if and only if $p_i<s_i$ for all $i\in [d]$, which we denote by $p\prec s$.
Clearly, $d$-dimensional bipartite comparability graphs are bipartite semilinear graphs.

\begin{lemma}
  \label{lem:comp}
  $d$-dimensional bipartite comparability graphs have an $O(d\log n)$-bit adjacency labeling scheme.
\end{lemma}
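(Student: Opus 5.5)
The plan is to encode each vertex by its rank in each of the $d$ coordinate orders, so that adjacency reduces to $d$ comparisons of integers. Let $n=|P\cup S|$. For each coordinate $i\in[d]$, sort the multiset $\{p_i : p\in P\}\cup\{s_i : s\in S\}$ of all $i$th coordinates. Assign to each point $x\in P\cup S$ an integer rank $r_i(x)\in\{1,\ldots,n\}$ with the following two properties.
\begin{enumerate}
\item For any $p\in P$ and $s\in S$, $r_i(p)<r_i(s)$ holds if and only if $p_i<s_i$.
\item Ties between a point of $P$ and a point of $S$ with equal $i$th coordinate are broken by placing the element of $S$ before the element of $P$, since the relation $\prec$ is strict.
\end{enumerate}
Ties among elements of the same side are broken arbitrarily. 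This tie-breaking convention is the only point requiring care, and it is immediate.

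The label of a vertex $x$ then consists of three parts: one bit indicating whether $x\in P$ or $x\in S$, followed by the $d$ ranks $r_1(x),\ldots,r_d(x)$, each written in $\lceil\log_2 n\rceil$ bits. The total length is $1+d\lceil\log_2 n\rceil=O(d\log n)$.

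The decoder $A$ works as follows.
\begin{itemize}
\item If the two labels carry the same side bit, it answers false, since edges only join $P$ to $S$.
\item Otherwise, letting $p$ be the vertex with side bit $P$ and $s$ the other, it answers true if and only if $r_i(p)<r_i(s)$ for all $i\in[d]$.
\end{itemize}
By the first property of the ranks, this holds exactly when $p_i<s_i$ for all $i$, that is, when $p\prec s$. This is precisely the adjacency condition, so the scheme is correct.

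The decoder runs in $O(d)$ word comparisons. The labels are computed by $d$ sorts in $O(dn\log n)$ time. If injective labels are desired, we may additionally append a $\lceil\log n\rceil$-bit vertex identifier, as discussed in the introduction. No step here is a genuine obstacle; the argument is the standard rank-encoding of a dominance order.
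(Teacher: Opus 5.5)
Your proposal is correct and follows the same approach as the paper: for each vertex and each of the $d$ axes, store the rank of its coordinate among all points, which costs $O(\log n)$ bits per axis. The side bit and the tie-breaking rule you add (placing the element of $S$ before the element of $P$ on equal coordinates, so that the strict relation $\prec$ is preserved) are details the paper leaves implicit, and you handle them correctly.
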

\begin{proof}
  Observe that adjacency is determined by the order of the coordinates of the points of $S$ and $P$ 
along each of the $d$ axes. It is therefore sufficient to store, for each point $p$ or $s$ and each 
coordinate axis $i$, the rank of the coordinate $p_i$ or $s_i$ with respect to the other points, 
which requires $O(\log n)$ bits per coordinate. (This requires presorting of the coordinates, 
along each axis separately.)
\end{proof}

Another example of semilinear family is the family of \emph{graphs of boxicity $d$}, which are 
intersection graphs of axis-aligned boxes in $\reals^d$. That is, the vertices of such a graph are
axis-aligned boxes in $\reals^d$, and its edges record pairs of boxes that intersect one another.
Since an axis-aligned box can be specified by two opposite corners, graphs of boxicity $d$ are 
$2d$-dimensional semilinear graphs. Interval graphs are the graphs of boxicity one.
It is known that graphs of boxicity $d$ have an $O(d\log n)$-bit adjacency labeling scheme~\cite[Chapter 4]{MR1971502}.

\subsection{Decomposition}

We reduce the case of arbitrary semilinear graphs to that of bipartite comparability graphs, as is done 
in Tomon~\cite{MR4591829}, and Cardinal and Yuditsky~\cite{CY25}.

\begin{proof}[Proof of Theorem~\ref{thm:semilinear}]
  As shown in~\cite{MR4591829}, we can assume, without loss of generality, that the semilinear 
family is defined by a predicate in the following \emph{disjunctive normal form}:
  \[
  (u,v) \in E \Leftrightarrow \bigvee_{i\in [k]}\left(\bigwedge_{j\in [\ell]} f_{i,j}(u, v) < 0 \right),
  \]
  where the functions $f_{i,j}$ are linear polynomials and $k$ and $\ell$ are constants.
  Note that we can restrict the signs to strict inequalities, a fact which is proved in~\cite{MR4591829}.

  Let $G=(V,E)$ be a graph in this family, and let $G_i=(V, E_i)$, for $i\in [k]$,
  be the graph defined by the $i$th conjunction of inequalities, so 
  $(u,v) \in E_i \Leftrightarrow \bigwedge_{j\in [\ell]} f_{i,j}(u, v) < 0$. Since each $f_{i,j}$ is linear, 
  it can be rewritten as $f_{i,j}(x,y) = g_{i,j}(x) + h_{i,j}(y)$ for some linear functions $g_{i,j}$ and $h_{i,j}$.
  Now each inequality of the form $f_{i,j}(u, v) < 0$ can be rewritten as $g_{i,j}(x) < -h_{i,j}(y)$, and the
  conjunction of the $\ell$ inequalities can be rewritten as 
  $\bigwedge_{j\in [\ell]} \left( g_{i,j}(x) < -h_{i,j}(y) \right)$, which is the same as
  $g_i(u) \prec - h_i(v)$, where $g_i(u) = \left( g_{i,1}(u),\ldots,g_{i,\ell}(u) \right)$
  and $h_i(v) = \left( h_{i,1}(v),\ldots,h_{i,\ell}(v) \right)$.
  This proves that $G_i$ is an $\ell$-dimensional bipartite comparability graph.
  From Lemma~\ref{lem:comp}, $G_i$ has $O(\ell \log n)$-bit adjacency labels.
  Now observe that $G$ is the union of the graphs $G_i$ for $i\in [k]$.
  An adjacency labeling for $G$ can therefore be obtained by concatenating the labels for each of 
  the $G_i$, with an overall size of $O(k \ell \log n) = O(\log n)$.
\end{proof}

\section{Visibility graphs}
\label{sec:polygon}

\begin{figure}
  \begin{center}
    \includegraphics[page=7]{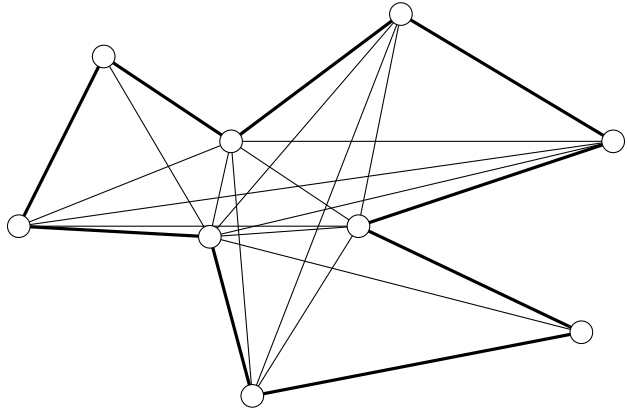}
  \end{center}
  \caption{\label{fig:vis}{\small{\sf The visibility graph of a simple polygon.}}}
\end{figure}

Recall that the visibility graph $G(P)$ of a given simple closed polygon $P$ on $n$ vertices is 
the graph whose set of vertices is the set of vertices of $P$, and such that two vertices are adjacent 
if and only they are \emph{visible} from each other in $P$, i.e. the line segment between them is 
contained in $P$. (For simplicity, although this is not essential, we may assume general position,
so that no segment $uv$ passes through a third vertex of $P$.)
See Figure~\ref{fig:vis}. We refer to Ghosh~\cite{G07}, 
de Berg, Cheong, van Kreveld, and Overmars~\cite{dBCvKO08}, and O'Rourke~\cite{OR17} 
for classical references on visibility graphs.
In this section, we prove the following (restated from the introduction):

\polygon*

\subsection{Decomposition of visibility graphs}

Our proof is by a careful examination of the algorithm described by 
Agarwal, Alon, Aronov, and Suri~\cite{MR1298916} for computing biclique decompositions
of size $O(n\log^3 n)$ of polygon visibility graphs.
It involves a reduction, via point-line duality, to the case of \emph{bichromatic segment intersection graphs}, 
which are bipartite intersection graphs of two sets of (say, red and blue) line segments, such that no two 
segments from the same set intersect in their relative interiors. We summarize this reduction in the following 
statement, whose proof can be found in Chazelle~\cite{MR780415}, and Chazelle and Guibas~\cite{MR1006078}.

\begin{lemma}[Agarwal et al.~\cite{MR1298916}]
  \label{lem:divide}
  Given a simple polygon $P$ on $n$ vertices, there exists a chord of $P$ that partitions $P$ into 
  two subpolygons $P_1$ and $P_2$, each having at most $2n/3$ vertices. Furthermore, the bipartite subgraph 
  of the visibility graph $G(P)$ consisting only of the edges of $G(P)$ having one endpoint in $P_1$ and 
  the other in $P_2$ is, in a suitable dual interpretation, a bichromatic segment intersection graph.
\end{lemma}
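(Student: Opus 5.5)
The plan follows the classical divide-and-conquer of Chazelle and of Agarwal, Alon, Aronov, and Suri. The proof of Lemma~\ref{lem:divide} has two parts: a balanced chord, and a dual interpretation of the cross edges.

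\textbf{Balanced chord.} First I would invoke Chazelle's polygon cutting theorem. Triangulate $P$; the dual graph of the triangulation is a tree of maximum degree $3$. A standard centroid argument on this tree yields a diagonal $c=ab$ of the triangulation that splits $P$ into two subpolygons $P_1$ and $P_2$, each with at most $2n/3+O(1)$ vertices. Both subpolygons contain $a$ and $b$. Ties, and the exact constant, can be handled by the usual tweak: take a triangle whose removal leaves components of size at most $n/2$, then group the pieces.

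\textbf{Cross edges and duality.} Every segment $uv$ with $u\in P_1$ and $v\in P_2$ that lies inside $P$ must cross the chord $c$. So visibility between $u$ and $v$ is equivalent to two conditions. The sub-segment from $u$ to the crossing point $x\in c$ must lie in $P_1$, and the sub-segment from $x$ to $v$ must lie in $P_2$. Parametrize the lines $\ell$ crossing $c$ in dual space; for instance, a point of dual space records $\ell$'s crossing point on $c$ and its slope.

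For $u\in P_1$, the lines through $u$ form a line $u^*$ in dual space. The lines through $u$ along which $u$ sees $c$ inside $P_1$ form a connected interval of $u^*$. This is because the set of points of $c$ visible from $u$ within $P_1$ is a single subinterval of $c$; I would check this using simple connectivity and the fact that $c$ is a chord. That interval is a red segment $\sigma_u$. Define blue segments $\tau_v$ for $v\in P_2$ similarly. Then $u$ and $v$ are adjacent if and only if the line $uv$, which is the dual point $u^*\cap v^*$, lies on both $\sigma_u$ and $\tau_v$, that is, if and only if $\sigma_u$ and $\tau_v$ intersect.

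\textbf{Main obstacle.} The hard step is showing that two red segments $\sigma_u$ and $\sigma_{u'}$ never cross in their relative interiors. Their intersection is the line $uu'$, so a crossing would mean that the line $uu'$, continued past both points, reaches $c$ with both $u$ and $u'$ seeing $c$ along it. Then $u$ and $u'$ are collinear with a free ray toward $c$, and the nearer point lies on the segment from the farther point to $c$. That contradicts general position, which forbids a segment passing through a third vertex. The alternative is the degenerate situation in which one of them is an endpoint, which is not a relative-interior crossing. Blue segments are handled symmetrically.

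The subtle points are the connectivity of the visibility interval and the endpoints $a$ and $b$, which lie in both subpolygons. I would assign the edges incident to $a$ and $b$ separately; they number $O(n)$ and cost only $O(\log n)$ bits. For the rest I would follow Chazelle and Guibas's treatment of visibility through a chord.
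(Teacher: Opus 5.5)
Your plan is correct and follows the argument the paper cites: Chazelle's polygon-cutting diagonal, the single visible interval of the chord, and point--line duality turning each vertex into a segment, with same-colored segments non-crossing. Two small repairs are needed. First, parametrize lines by their crossing point on $c$ and their \emph{reciprocal} slope, with $c$ on the $x$-axis (or use the standard duality), since with the ordinary slope the pencil through $u$ is a hyperbola rather than a line. Second, the configuration where the nearer vertex $u$ lies on the sight segment from $u'$ to $c$ is not ruled out by general position; it occurs generically whenever a reflex vertex cuts off the view of $u'$, but then both edges at $u$ lie on one side of the line $uu'$, so this line is an endpoint of $\sigma_{u'}$ (a T-junction), which is exactly why same-colored segments can touch but never cross in their relative interiors.
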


The proof of the next statement is inspired by the \emph{hereditary segment tree} construction of Chazelle,
Edelsbrunner, Guibas, and Sharir~\cite{MR1272519}, initially designed for reporting and counting bichromatic 
segment intersections.

\begin{figure}
  \begin{center}
    \includegraphics[page=5, width=\textwidth]{Adjacencylabels_figures.pdf}
  \end{center}
  \caption{\label{fig:tree}{\small{\sf A hereditary segment tree for a collection of line segments in the plane.
        The thicker red segment on the right is stored in the filled nodes of the tree, as a long segment in the two leaves,
        and as a short segment in the other nodes.}}}
\end{figure}

\begin{figure}
  \begin{center}
    \includegraphics[page=6]{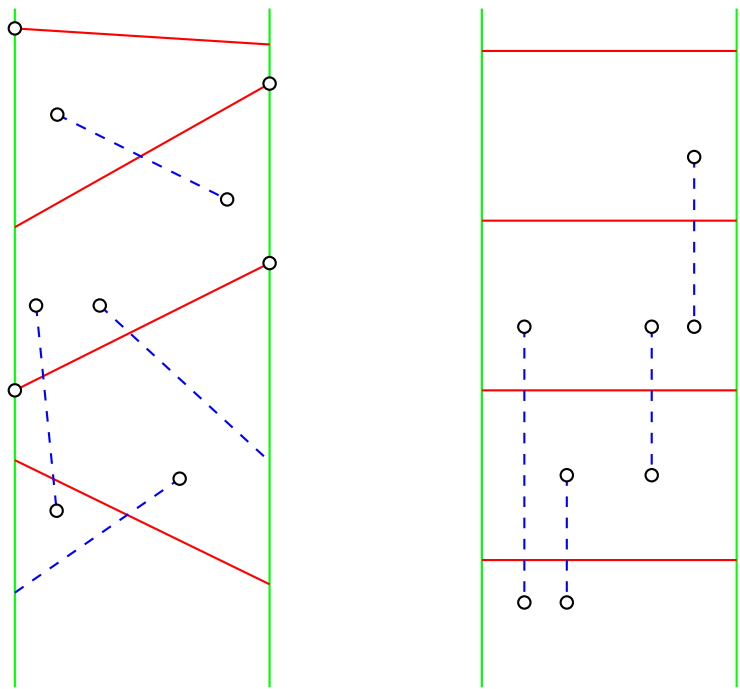}
  \end{center}
  \caption{\label{fig:slab}{\small{\sf A slab containing a number of short blue (dashed) segments, 
    and long red (solid) segments. Their intersection graph only depends on the relative vertical order of the 
    endpoints, with respect to the vertically sorted sequence of the long segments, 
    as shown in the alternative intersection model on the right. The resulting graph is therefore 
    a two-dimensional bipartite comparability graph, with suitable interpretations
    of comparisons and their signs.}}}
\end{figure}

\begin{lemma}
  Bichromatic segment intersection graphs have an $O(\log^2 n)$-bit adjacency labeling scheme.
\end{lemma}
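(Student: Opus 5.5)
We build a hereditary segment tree and, at each node, turn the red–blue intersection pattern into a two-dimensional bipartite comparability graph, then reuse Lemma~\ref{lem:comp}. Let $R$ and $B$ be the red and blue segments, $n$ in total. Sort all $2n$ endpoint $x$-coordinates and build a balanced binary tree over the elementary $x$-intervals; each node $v$ corresponds to a vertical slab $\sigma_v$. As in the segment tree, a segment $e$ is \emph{long} at $v$ if it spans $\sigma_v$ but not $\sigma_{\mathrm{parent}(v)}$, i.e.\ $v$ is one of its $O(\log n)$ canonical nodes. Following the hereditary construction, $e$ is \emph{short} at every proper ancestor of a canonical node of $e$. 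There are $O(\log n)$ such ancestors, since they lie on the two root-to-leaf paths to the endpoints of $e$. Hence every segment is stored, long or short, in $O(\log n)$ nodes.

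Each bichromatic intersecting pair $(r,b)$ is charged to exactly one node. Take a node $v$ where at least one of the two is long, the intersection point lies in $\sigma_v$, and the node is the highest such. Where both are long at $v$, we have two families of pairwise non-crossing segments each spanning the slab. Where one is long and the other short, we intersect long red segments with short blue ones, or the symmetric case. I would verify carefully that every intersecting pair is counted exactly once, and that pairs handled at $v$ intersect inside $\sigma_v$. This works by restricting each segment to $\sigma_v$ and using that the intersection point lies in the slab where the long one is canonical and the short one passes. This bookkeeping, which follows~\cite{MR1272519}, is the main obstacle. If exact-once charging is awkward, I will accept double counting. Adjacency is a disjunction over nodes, so it is only needed that an edge is reported iff some node reports it.

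Inside a slab, the long red segments are pairwise non-crossing and span $\sigma_v$, so they are totally ordered vertically; let them be $r_1<\dots<r_m$. Each short blue segment $b$ (clipped to $\sigma_v$) lies between two consecutive reds at each endpoint. Let $\alpha(b)$ and $\beta(b)$ be the ranks of the gaps containing its left and right endpoints. Since $b$ is a segment and the reds are segments, $b$ crosses $r_i$ iff $r_i$ separates its two endpoints. This holds because $b$ is contained in the slab and the $r_i$ are full-width there, so $b$ crosses $r_i$ at most once, and does so iff its endpoints are on opposite sides. Thus $b$ crosses $r_i$ iff $\min(\alpha,\beta)< i\le \max(\alpha,\beta)$. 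This is a two-dimensional bipartite comparability relation, as in Figure~\ref{fig:slab}, and can be encoded with $O(\log n)$-bit ranks per Lemma~\ref{lem:comp}. For the long–long case, both families are monotone ordered chains, and two long segments spanning the slab cross iff their vertical order swaps between the left and right boundaries. So we store for each segment its rank on the left wall and on the right wall among the merged red and blue families, and again compare two numbers.

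The label of a segment $e$ is the list, over the $O(\log n)$ nodes $v$ storing $e$, of the identifier of $v$ ($O(\log n)$ bits), a bit saying long/short, and the $O(1)$ ranks of $e$ at $v$ ($O(\log n)$ bits each). This totals $O(\log^2 n)$ bits. To decide adjacency of $r$ and $b$, we find their common nodes and test the comparability condition at each valid one (where at least one is long). We output true iff some test succeeds. Node identifiers can be taken as root-to-node bit strings, so that common nodes are found by merging sorted lists in linear time.
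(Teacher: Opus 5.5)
Your proposal is correct and follows essentially the same route as the paper. Both use a hereditary segment tree in which each segment is long or short at $O(\log n)$ nodes, encode the crossings inside each slab as a two-dimensional bipartite comparability graph via Lemma~\ref{lem:comp}, and build labels that list node identifiers with $O(1)$ ranks per node. Your explicit long--long test is in fact needed. Two segments can become long at the same node $v$ on the path to their intersection point, being short at all proper ancestors of $v$ and absent below it, and this is a case that the paper's Observation~\ref{obs:inter} (``exactly one'' is long) and its long--short-only encoding overlook. The only detail you should add is a color bit, so that the decoder knows which label plays the red role and never reports two same-colored segments as adjacent.
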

\begin{proof}
  Let $G$ be a bichromatic segment intersection graph on $n$ vertices,
  $S$ be the collection of $n$ red and blue line segments corresponding to the vertices of 
  $G$, and $I$ the collection of $n$ red and blue intervals 
  obtained by projecting the red and blue segments of $S$ on the horizontal axis.

  Consider all $2n$ endpoints of the intervals in $I$, which we can assume to be distinct, without loss 
  of generality, and build a balanced binary search tree whose internal nodes 
  are in one-to-one 
  correspondence with the endpoints. The leaves correspond to spaces (called atomic intervals)
  between pairs of successive endpoints, or to the halfline 
  before the first endpoint or after the last one. Every node of the tree therefore corresponds to 
  a vertical slab in the plane, bounded by two vertical lines (possibly at infinity).
  We refer to Figure~\ref{fig:tree} for an illustration.

  Each segment $s$ of $S$ is stored at the following nodes of the binary tree:
  (i) at the nodes whose slab contains at least one endpoint of $s$ in its interior, as a \emph{short} segment,
  (ii) at the nodes whose slab does not contain any endpoint $s$ in its interior, but whose parent slab does, 
  as a \emph{long} segment. We make the following two observations.
  \begin{observation}
    \label{obs:log}
    Every segment of $S$ is stored at $O(\log n)$ nodes.
  \end{observation}

  \begin{observation}
    \label{obs:inter}
    Every intersection between a red and a blue segment occurs in a slab in which exactly one of the two segment is stored as a long segment.
  \end{observation}

  In order to encode these intersections, we consider each slab, and the intersection graph of
  (i) the blue segments stored as short segments in the corresponding node,
  (ii) the red segments stored as long segments in the corresponding node,
  or symmetrically, with the role of red and blue segments exchanged.
  Our key observation is that these intersection graphs are two-dimensional bipartite
  comparability graphs (in fact, bipartite permutation graphs), see Figure~\ref{fig:slab} for an illustration.
  Hence from Lemma~\ref{lem:comp}, they have an $O(\log n)$ adjacency labeling scheme.

  It remains to combine the labels corresponding to the different slabs together.
  From observation~\ref{obs:inter}, this will allow us to recover all the edges of $G$.
  Every node of the tree can be identified by an $O(\log n)$ bit label.
  We can therefore, for every vertex of $G$, hence segment of $S$, encode the list of nodes of
  the tree in which it appears, and from observation~\ref{obs:log}, this will use at most
  $O(\log^2 n)$ bits. For each of these at most $O(\log n)$ nodes, we provide the $O(\log n)$-bit word encoding adjacency
  in the two-dimensional bipartite comparability graphs corresponding to the node.
  Overall, we spend $O(\log^2 n)$ bits per vertex, as claimed.
\end{proof}

The labels of the bichromatic segment intersection graph, which encode, in dual form, the visibility
segments in $P_1\times P_2$, can be combined with the labels obtained by
recursing on the visibility graphs of the subpolygons $P_1$ and $P_2$ (see Lemma~\ref{lem:divide}). This
results in an additional logarithmic factor, hence $O(\log^3 n)$-bit labels. This finishes the proof 
of Theorem~\ref{thm:polygon}.

\subsection{Decomposition of capped graphs}

A closely related class of graphs is that of \emph{terrain visibility graphs}, which are visibility graphs 
defined on an $x$-monotone polygonal curve (referred to in the literature as a \emph{$1.5$-dimensional terrain}), 
where two vertices are said to be visible from each other if and only if the line segment between them 
lies above the curve~\cite{MR4117719,MR4329054,MR3355568,MR4623911}; again, for simplicity, we
assume here general position.

\begin{figure}
  \begin{center}
    \includegraphics[page=8, width = \textwidth]{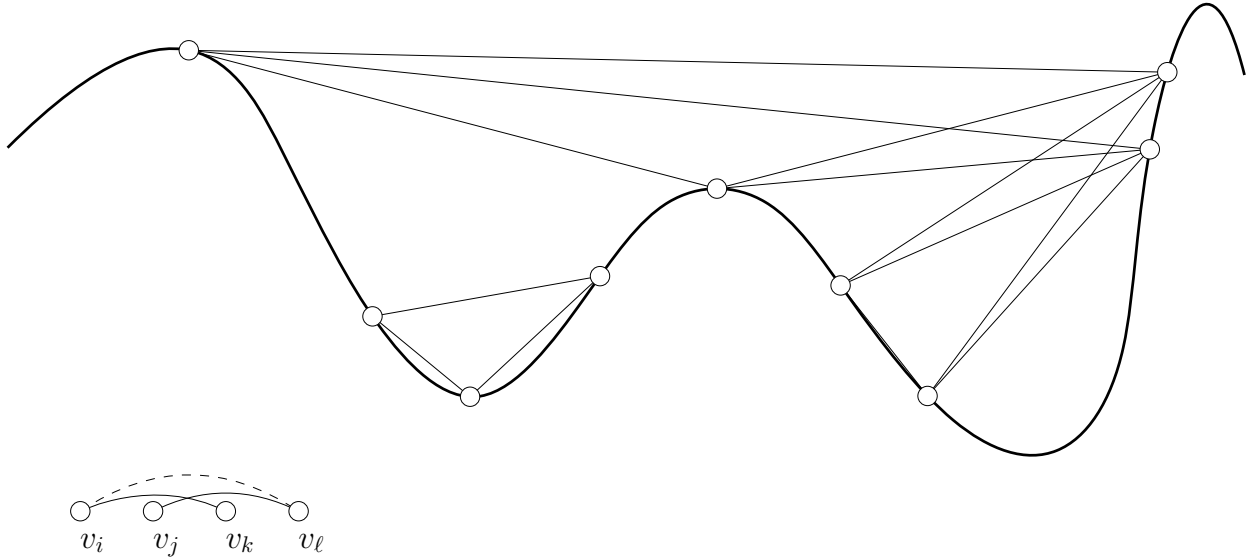}
  \end{center}
  \caption{\label{fig:terrain}\small \sf Example of a visibility graph of points on an $x$-monotone curve. Such visibility graphs are capped graphs, since for any ordered 4-tuple of vertices $v_i$, $v_j$, $v_k$ and $v_{\ell}$, if both pairs of vertices $v_iv_k$ and $v_jv_{\ell}$ are adjacent, then so is $v_iv_{\ell}$, as illustrated in the bottom left of the figure.}
\end{figure}

Recently, Davies, Krawczyk, McCarty, and Walczak~\cite{MR4562782} defined a \emph{capped graph} as 
a graph whose vertex set can be ordered in such a way that for any ordered 4-tuple of vertices
$v_i, v_j, v_k, v_{\ell}$ with $v_i < v_j < v_ k < v_{\ell}$, if both $v_iv_k$ and $v_jv_{\ell}$ are edges, then 
$v_iv_{\ell}$ is also an edge. It is not difficult to observe that terrain visibility graphs, 
with the vertices ordered along the terrain, are capped graphs. 
In fact, it was shown in~\cite{MR4562782} that capped graphs are exactly visibility graphs of points on a 
(not necessarily polygonal) $x$-monotone curve. See Figure~\ref{fig:terrain} for an illustration. 
Observe that capped graphs form a hereditary class, unlike terrain or polygon visibility graphs.
It is known that there exist capped graphs that have no realization as polygonal terrain visibility graphs,
see Ameer, Gibson-Lopez, Krohn, Soderman, and Wang~\cite{MR4117719} for an example.

Cardinal and Yuditsky~\cite{CY25} proved that a bisection algorithm similar to the one described 
in the previous subsection produces near-linear-size biclique decompositions of capped graphs. 
In that case, the bipartite graph processed at every recursive step can be shown to be a two-dimensional 
comparability graph. More precisely, Cardinal and Yuditsky showed the following.

\begin{lemma}
  \label{lem:compfromcapped}
  Let $G$ be a capped graph on $n$ vertices, with vertex ordering $v_1<v_2<\ldots <v_n$.
  For any $\ell\in [n]$, consider the subgraph of $G$ consisting of the edges of the form $v_iv_j$, where $i<\ell$ and $j\ge\ell$.
  Then this subgraph is a two-dimensional comparability graph.
\end{lemma}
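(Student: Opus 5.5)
Let $L=\{v_i: i<\ell\}$ and $R=\{v_j: j\ge \ell\}$, and let $H$ be the bipartite graph of edges of $G$ between $L$ and $R$. I want to show that $H$ is a two-dimensional bipartite comparability graph, meaning there are points $p(u)\in\reals^2$ for $u\in L$ and $s(w)\in\reals^2$ for $w\in R$ with $uw\in E(H)$ if and only if $p(u)\prec s(w)$. The plan has three steps.

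\textbf{Step 1: an ordering property from cappedness.} Order $L$ by decreasing index, so the vertex closest to the cut comes first, and order $R$ by increasing index. Consider two crossing edges $v_iv_k$ and $v_jv_m$ with $i<j<\ell\le k<m$. The capped property, applied to the ordered 4-tuple $v_i<v_j<v_k<v_m$, forces $v_iv_m\in E$. So among edges of $H$, any "crossing" pair produces the "outer" edge. This is a closure property reminiscent of Ferrers or chain graphs, but weaker.

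\textbf{Step 2: candidate coordinates and intervals.} For $u=v_i\in L$, let $N(u)\subseteq R$ be its neighbourhood. I conjecture that $N(u)$, with $R$ ordered by index, behaves like a downward-closed set in a 2-dimensional dominance order. The natural coordinates are:
\begin{itemize}
\item the first coordinate of $s(v_j)$ is $-j$, and the first coordinate of $p(v_i)$ encodes the farthest index $r(i)=\max\{j: v_iv_j\in E\}$;
\item the second coordinate captures the order on $L$, so that $v_iv_j\in E$ requires $v_j$ to be "visible" from $v_i$ in the sense of an upward-closed set in $i$ for fixed $j$.
\end{itemize}
Concretely, I would define, for each $w=v_j\in R$, the set $A(w)=\{i<\ell: v_iv_j\in E\}$. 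I would then try to show that the edge set is exactly
\[
\{(v_i,v_j) : j\le r(i) \text{ and } i\ge a(j)\}
\]
for suitable thresholds $r(i)$ and $a(j)$. That description is a 2-dimensional dominance relation, obtained by setting $p(v_i)=(-r(i),-i)$ and $s(v_j)=(-j+\tfrac12,-a(j)+\tfrac12)$ after reparametrizing so that strict inequalities hold.

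\textbf{Step 3: proving the threshold description (the main obstacle).} One inclusion is trivial. For the converse, take $v_i,v_j$ with $j\le r(i)$ and $a(j)\le i$. Then there are edges $v_iv_m$ with $m\ge j$ and $v_{i'}v_j$ with $i'\le i$. The case $i'<i$ and $m>j$ is the problematic one: the capped property gives closure toward outer pairs, and here I need the "inner" pair $v_iv_j$. So with the naive thresholds the plan likely fails.

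I would therefore instead choose the coordinates via the capped rule as a transitivity. Define a relation on $L\cup R$ by $v_i \preceq v_m$ if $v_iv_m\in E$. Cappedness says that if $v_i\to v_k$ and $v_j\to v_m$ cross, then $v_i\to v_m$. This is exactly the condition for a bipartite graph to be the comparability (dominance) graph of a 2-dimensional order. So $H$ is the intersection of two linear-type orders:
\begin{itemize}
\item the first is the index order of each side;
\item the second is a linear extension built from the edges.
\end{itemize}
Checking that the second order is consistent, i.e. acyclic, using the capped 4-tuple condition repeatedly is the crux. I expect this consistency check to be the real obstacle, and I would attack it first on small forbidden configurations.
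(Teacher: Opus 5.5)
\textbf{There is a genuine gap: the proposal never produces a valid embedding.} Your Step~1 observation is correct, but the thresholds in Step~2 are anchored at the \emph{farthest} neighbors: $r(i)=\max$, and $a(j)=\min A(v_j)$, as your Step~3 implicitly uses. The resulting description is false, not merely hard to prove. Take $v_1<v_2<v_3<v_4$, $\ell=3$, and edges $v_1v_3, v_2v_4, v_1v_4$. This graph is capped, since the only $4$-tuple requires $v_1v_4$, which is present. Moreover $r(2)=4\ge 3$ and $a(3)=1\le 2$, yet $v_2v_3\notin E$. You sense this in Step~3, but the fallback only asserts that the crossing-closure property ``is exactly the condition'' for a two-dimensional dominance representation. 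That assertion is the content of the lemma. You also leave undone the consistency (acyclicity) check, which you yourself call the crux. So nothing beyond Step~1 is actually established.

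The missing idea is small. Cappedness propagates edges \emph{outward}, so the thresholds must be anchored at the \emph{nearest} neighbors across the cut. For $i<\ell$ let $f(i)=\min\{j\ge\ell : v_iv_j\in E\}$, or $n+1$ if there is none. For $j\ge\ell$ let $g(j)=\max\{i<\ell : v_iv_j\in E\}$, or $0$ if there is none. Put $p(v_i)=(f(i)-\tfrac12,\, i)$ and $s(v_j)=(j,\, g(j)+\tfrac12)$, so that $p(v_i)\prec s(v_j)$ if and only if $f(i)\le j$ and $i\le g(j)$. Every edge satisfies both inequalities. Conversely, suppose both hold. If either holds with equality, then $v_iv_j$ is one of the witnessing edges. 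Otherwise $i<g(j)<\ell\le f(i)<j$, and the edges $v_iv_{f(i)}$ and $v_{g(j)}v_j$ cross. The capped condition applied to $v_i<v_{g(j)}<v_{f(i)}<v_j$ then gives $v_iv_j\in E$. With this choice, your Step~1 property is all that is needed, and no acyclicity argument arises.
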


From Lemma~\ref{lem:comp}, two-dimensional 
comparability graphs have adjacency labels of size $O(\log n)$.
Hence we can choose $\ell = \lfloor n/2\rfloor$, encode the two-dimensional comparability graph given by
Lemma~\ref{lem:compfromcapped} using Lemma~\ref{lem:comp},
and recurse on the two subgraphs induced respectively by the vertices $v_i$ with $i<\ell$ and $i\ge\ell$.
Multiplying by the logarithmic factor incurred from the recursion, we obtain the following.

\begin{thm}
    Capped graphs have an $O(\log^2 n)$-bit adjacency labeling scheme. 
\end{thm}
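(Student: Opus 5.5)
The plan is a balanced recursive bisection on the capped ordering, with each vertex collecting one $O(\log n)$-bit sublabel per recursion level, as sketched just before the statement. Fix a vertex ordering $v_1<\cdots<v_n$ witnessing that $G$ is capped, and build a recursion tree over index intervals. The root is $[1,n]$, split at $\ell=\lfloor n/2\rfloor$ into $[1,\ell-1]$ and $[\ell,n]$. Since capped graphs form a hereditary class, and the induced ordering on a contiguous interval still satisfies the 4-tuple condition, each subinterval induces a capped graph and we can recurse until intervals are singletons. The tree has depth $O(\log n)$, and every vertex lies in exactly one interval at each level.

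At a node with interval $[a,b]$ and split point $\ell$, apply Lemma~\ref{lem:compfromcapped} to the induced capped graph on $v_a,\ldots,v_b$. This shows that the crossing edges $v_iv_j$ with $a\le i<\ell\le j\le b$ form a two-dimensional bipartite comparability graph. To apply Lemma~\ref{lem:comp}, I read this as a comparability graph between the two sides $\{v_i: i<\ell\}$ and $\{v_j: j\ge\ell\}$, with points in $\reals^2$ and adjacency given by dominance $\prec$, possibly after flipping coordinate signs. Lemma~\ref{lem:comp} then gives each vertex of the node an $O(\log n)$-bit word, namely its two coordinate ranks. To this I add one bit recording which side of the split the vertex is on.

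The label of $v$ is its index in $[n]$ (to locate nodes) followed by the sequence of per-level words along its root-to-leaf path. This uses $O(\log n)$ levels of $O(\log n)$ bits each, so $O(\log^2 n)$ bits in total. To decode a pair $u,v$, compute from their indices the deepest common node, which is where the two vertices are separated. Their adjacency is decided exactly at that node, using the side bits and the comparability words stored for that level. This is correct because every edge $v_iv_j$ with $i<j$ crosses the split of exactly one node, namely the lowest node whose interval contains both endpoints.

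The main obstacle is making the use of Lemma~\ref{lem:compfromcapped} precise. It is stated for a whole capped graph with a global $\ell$, so I need that restricting to the contiguous interval $[a,b]$ keeps the graph capped with the same ordering; this holds because the 4-tuple condition only involves vertices inside the interval. I also need that the ``two-dimensional comparability graph'' it produces is bipartite-dominance form compatible with Lemma~\ref{lem:comp}. If that lemma gives a general two-dimensional poset comparability instead, I would fall back to storing each vertex's two linear-extension ranks, which still determine comparability with $O(\log n)$ bits. The remaining bookkeeping is routine.
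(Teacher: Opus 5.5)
Your proposal is correct and is essentially the paper's argument: bisect the capped ordering recursively, encode the crossing edges at each of the $O(\log n)$ levels with $O(\log n)$ bits via Lemma~\ref{lem:compfromcapped} and Lemma~\ref{lem:comp}, and concatenate, for $O(\log^2 n)$ bits per vertex. Your extra bookkeeping (contiguous intervals stay capped, and a vertex index to locate the separating node) is what the paper leaves implicit. One small fix, shared with the paper: choose the split so both sides are nonempty, e.g.\ $\ell=\lfloor n/2\rfloor+1$, since with $\ell=\lfloor n/2\rfloor$ the recursion stalls when $n\le 3$.
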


\bibliography{Adjacencylabels}

\end{document}